\documentclass[a4paper,onecolumn,superscriptaddress,11pt,accepted=2019-06-01]{quantumarticle}
\pdfoutput=1
\usepackage[utf8]{inputenc}
\usepackage[english]{babel}
\usepackage[T1]{fontenc}
\usepackage{amsmath}
\usepackage{amssymb}
\usepackage{amsthm}
\usepackage{revsymb}
\usepackage{graphicx,color}
\usepackage{bm}
\usepackage{cite}
\usepackage{enumitem}   
\usepackage{hyperref}

\usepackage{tikz}
\usepackage{lipsum}

\newtheorem{thm}{Theorem}
\newtheorem{prop}{Proposition}
\newtheorem{lemma}{Lemma}
\newtheorem{corol}{Corollary}

\theoremstyle{remark}
\newtheorem{remark}{Remark} 

\theoremstyle{definition}
\newtheorem{example}{Example}

\numberwithin{equation}{section}

\newcommand{\tr}{\mathop{\mathrm{tr}}\nolimits}
\renewcommand{\Re}{\mathop{\mathrm{Re}}\nolimits}

\newcommand{\rank}{\mathop{\mathrm{rank}}\nolimits}
\renewcommand{\d}{\mathrm{d}}
\newcommand{\rmi}{\mathrm{i}}

\newcommand{\rme}{\mathrm{e}}
\newcommand{\ket}[1]{|{#1}\rangle}
\newcommand{\bra}[1]{\langle{#1}|}

\renewcommand{\bullet}{\,\,\begin{picture}(-1,1)(-1,-3)\circle*{2}\end{picture}\ \,\,}

\definecolor{dgreen}{rgb}{0,0.5,0}

\definecolor{delete}{cmyk}{0.5,0,0,0}

\begin{document}
\title[Generalized Adiabatic Theorem and Strong-Coupling Limits]{Generalized Adiabatic Theorem and Strong-Coupling Limits}
\date{\today}
\author{Daniel Burgarth}
\orcid{0000-0003-4063-1264}
\affiliation{Center for Engineered Quantum Systems, Dept.\ of Physics \& Astronomy, Macquarie University, 2109 NSW, Australia}
\author{Paolo Facchi}
\orcid{0000-0001-9152-6515}
\affiliation{Dipartimento di Fisica and MECENAS, Universit\`a di Bari, I-70126 Bari, Italy}
\affiliation{INFN, Sezione di Bari, I-70126 Bari, Italy}
\author{Hiromichi Nakazato}
\orcid{0000-0002-5257-7309}
\affiliation{Department of Physics, Waseda University, Tokyo 169-8555, Japan}
\author{Saverio Pascazio}
\orcid{0000-0002-7214-5685}
\affiliation{Dipartimento di Fisica and MECENAS, Universit\`a di Bari, I-70126 Bari, Italy}
\affiliation{INFN, Sezione di Bari, I-70126 Bari, Italy}
\affiliation{Istituto Nazionale di Ottica (INO-CNR), I-50125 Firenze, Italy}
\author{Kazuya Yuasa}
\orcid{0000-0001-5314-2780}
\affiliation{Department of Physics, Waseda University, Tokyo 169-8555, Japan}
\maketitle

\begin{abstract}
We generalize Kato's adiabatic theorem to nonunitary dynamics with an isospectral generator. This enables us to unify two strong-coupling limits: one driven by fast oscillations under a Hamiltonian, and the other driven by strong damping under a Lindbladian. 
We discuss the case where both mechanisms are present and provide nonperturbative error bounds. 
We also analyze the links with the quantum Zeno effect and dynamics.
\end{abstract}


\section{Introduction}

Strong coupling and adiabatic decoupling are effective tools towards quantum control strategies, whose primary objective is to preserve the coherence of quantum mechanical systems~\cite{lidarbook}.
These control procedures are manifestations of the quantum Zeno effect (QZE)~\cite{ref:QZEMisraSudarshan} and its extension, known as the quantum Zeno dynamics (QZD)~\cite{ref:QZS}. They all represent robust tools for quantum control and lead to essentially the same physics. For instance, instead of repeated projective measurements, one might consider the application of strong fields or strong damping to induce the QZE and hinder part of the evolution\@.

Yet, from a mathematical perspective the proof techniques used are often very different, and it is sometimes hard to understand the relationships between the various manifestations. In particular, it is not possible to describe situations in which several mechanisms are present simultaneously. This point is however important, because physical systems are often not ``clean'' enough to focus on only one type of quantum dynamics. Our principal motivation is therefore to develop a common mathematical framework unifying such physics.

To this end, we shall focus on the ``continuous'' formulation of the QZE, in which the system to be studied is strongly coupled to an external field  or to a reservoir that induces (fast) dissipation. 
The case of strong fields was proven by making use of the adiabatic theorem for unitary dynamics~\cite{ref:QZS}, and, alternatively, with a unitary perturbation theory~\cite{ref:PaoloAdiabatic}. Averages over fastly oscillating terms are also commonly found in other areas of mathematical physics, in particular in deriving effective master equations for open quantum systems in the weak-coupling limit \cite{Davies}.
The case of strong damping was analyzed using dissipative perturbation theory~\cite{ref:ZanardiDFS-PRL2014,ref:ZanardiDFS-PRA2015}. While the proofs for the two cases are clearly related, it is difficult to see the precise connection. Here, we shall develop a suitable extension of Kato's adiabatic theorem~\cite{ref:KatoAdiabatic}, by relaxing the condition of the unitarity of the evolution. This extension allows us to unify and simplify the proof of the continuous QZE, and to describe also the situation in which both field and dissipation are present at the same time. We can furthermore provide nonperturbative error bounds for both cases.
Since the physical mechanisms at the basis of QZD have been proven in a number of recent experiments~\cite{Itano,ref:QZEExp-Ketterle,ref:QZEExp-Schafer:2014aa,QZDSignoles,QZDBretheau,QZDBarontini,ref:ZenoDiamond-Kalb:2016aa}, our results might be relevant also in the light of possible future experimental implementations.

\section{Different Manifestations of the QZD}
\label{diffmanif}
Let us briefly recapitulate the main features of the different manifestations of the QZD~\cite{ref:ControlDecoZeno,ref:PaoloSaverio-QZEreview-JPA} (see also Refs.~\cite{ref:QZE-PetroskyTasakiPrigogine,ref:SchulmanJumpTimePRA,ref:QZS,ref:BBZeno,ref:KoshinoShimizuQZE,ref:QZEExp-Ketterle,ref:QZEExp-Schafer:2014aa,ref:ZanardiDFS-PRL2014,ref:ZanardiDFS-PRA2015}).
Here, we in particular focus on the continuous strategies.

The standard way to induce the QZD is to frequently perform projective measurements~\cite{ref:QZEMisraSudarshan,ref:InverseZeno,ref:artzeno}.
Consider a finite-dimensional quantum system with a Hamiltonian $H$.
The unitary evolution of its density matrix $\rho$ is given by $\rme^{-\rmi t\mathcal{H}}(\rho) = \rme^{-\rmi tH} \rho \rme^{\rmi tH}$, with $\mathcal{H}=[H,\bullet]$. During the time interval $t$, one performs $n$ projective measurements at regular time intervals $t/n$.
A measurement is represented by a set of orthogonal projection operators $\{P_k\}$ acting on the Hilbert space, satisfying $P_kP_\ell=P_k\delta_{k\ell}$ and $\sum_kP_k=I$.
We consider nonselective measurements~\cite{schwinger,peres}, described by a projection $\mathcal{P}$ acting on a density operator $\rho$ as
\begin{equation}
  \mathcal{P}(\rho)=\sum_kP_k\rho P_k.
  \label{eqn:Pdef}
\end{equation}
In the limit of infinitely frequent measurements (\textit{Zeno limit}), the evolution of the system is described by
\begin{equation}
  \left(\mathcal{P}\rme^{-\rmi\frac{t}{n}\mathcal{H}}\right)^n
  \to \rme^{-\rmi t\mathcal{P}\mathcal{H}\mathcal{P}}\mathcal{P}
  =\rme^{-\rmi t\mathcal{H}_Z}\mathcal{P}
  \quad \text{as} \quad n\to +\infty,
  \label{eqn:ZenoLimit}
\end{equation}
where
$\mathcal{H}_Z=[H_Z,\bullet]$ and
\begin{equation}
	H_Z=\sum_kP_kHP_k.
	\label{eqn:ZenoHamiltonian}
\end{equation}
Here and in the following the convergence is in a suitable norm (in finite dimensions all norms are equivalent). In the Zeno limit, all transitions among the subspaces specified by the projection operators $\{P_k\}$ are suppressed.
The system unitarily evolves within the (\textit{Zeno}) subspaces defined by the projection operators $\{P_k\}$ under the action of the projected (\textit{Zeno}) Hamiltonian $H_Z$.
This is the QZD~\cite{ref:QZS}\@.

A QZD can also be induced by continuous strategies, that do not make use of the pulsed controls described above. Below, we briefly review the two main possibilities.

\paragraph{(i) Strong continuous coupling/fast oscillations:}
QZD can be induced by a \textit{strong continuous coupling/fast oscillations}~\cite{ref:SchulmanJumpTimePRA,ref:QZS}.
We add a Hamiltonian $\gamma K$ with a large coupling constant $\gamma$, so that the total Hamiltonian reads $\gamma K+H$.
In the strong-coupling limit, 
we get~\cite{ref:QZS}
\begin{equation}
  \rme^{-\rmi t(\gamma \mathcal{K}+\mathcal{H})}
  \sim \rme^{-\rmi t\gamma \mathcal{K}}\rme^{-\rmi t\mathcal{H}_Z}
  \quad \text{as} \quad \gamma \to +\infty,
  \label{eqn:StrongCoupling}
\end{equation}
where 
$\mathcal{K}=[K,\bullet]$ and $H_Z$ is defined again by Eq.~(\ref{eqn:ZenoHamiltonian}), but the projections $\{P_k\}$ are the spectral projections of the Hamiltonian $K=\sum_k\varepsilon_kP_k$.\footnote{In the following $A(\gamma)\sim B(\gamma)$, as $\gamma\to+\infty$, will mean that $A(\gamma)=B(\gamma) + o(B(\gamma))$, where $o(B(\gamma))$ is an operator such that $\| o(B(\gamma)) \| /  \|B(\gamma)\| \to 0$ as $\gamma \to+\infty$. On the other hand, $A(\gamma) = \mathcal{O}( B(\gamma))$ will mean that $\| A(\gamma) \| /  \|B(\gamma)\|$ is bounded as $\gamma \to+\infty$.}

\paragraph{(ii) Strong damping:} 
QZD can also be induced via \textit{strong damping}~\cite{ref:NoiseInducedZeno,ref:ZanardiDFS-PRL2014,ref:ZanardiDFS-PRA2015,Madalin,ref:VictorPRX,ref:ZanardiDFS-PRA2017}.
Suppose that the system, governed by a Hamiltonian $H$, is exposed to a strong Markovian damping process $\rme^{t\mathcal{D}}$ with $t> 0$ which steers the system into invariant (decoherence-free) subspaces, as
\begin{equation}
	\rme^{t\mathcal{D}}
	\to \mathcal{P}_\varphi
	\quad \text{as} \quad t\to +\infty,
\end{equation}
where $\mathcal{P}_\varphi$ 
is the completely positive and trace-preserving (CPTP) projection onto the invariant subspaces.\footnote{The subscript $\varphi$ is put on $\mathcal{P}_\varphi$ to stress that it is the projection onto the ``peripheral'' spectrum of $\rme^{t\mathcal{D}}$, i.e.~the eigenvalues of modulus 1 on the boundary of the unit disk on the complex plane, within which the spectrum of any CPTP maps is confined. See Proposition~\ref{eqn:GKLSgenerator} in Appendix \ref{qsemi}.}
Then, in the strong-damping limit, the system evolves as~\cite{ref:ZanardiDFS-PRL2014,ref:ZanardiDFS-PRA2015}
\begin{equation}
	\rme^{t(\gamma\mathcal{D}-\rmi\mathcal{H})}
	\to \rme^{-\rmi t\mathcal{P}_\varphi\mathcal{H}\mathcal{P}_\varphi}\mathcal{P}_\varphi=\rme^{-\rmi t\mathcal{H}_Z}\mathcal{P}_\varphi
	\quad \text{as} \quad \gamma \to +\infty,
	\label{eqn:QZEStrongDampUnitary}
\end{equation}
where $\mathcal{H}_Z$ is defined by $\mathcal{P}_\varphi \mathcal{H} \mathcal{P}_\varphi$.

\begin{figure}
\centering
\includegraphics[width=\textwidth]{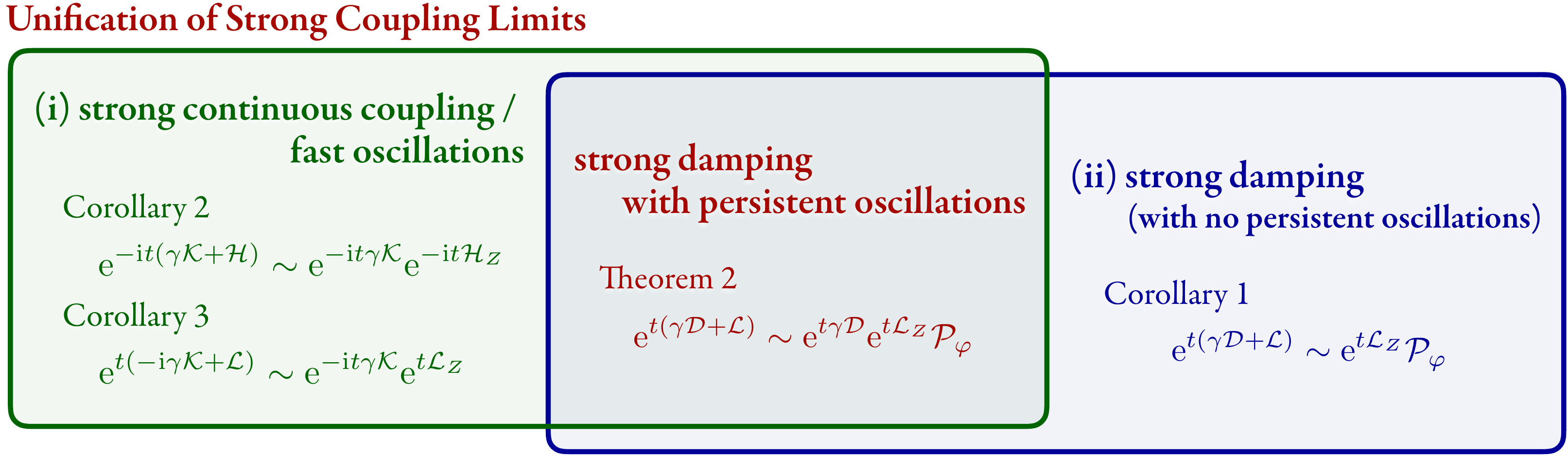}
\caption{
Summary of the main results. Strong-coupling limits, (i) via strong continuous coupling/fast oscillations and (ii) via strong damping, are unified and generalized in Theorem~\ref{thm:doublepaolo}, by making use of the generalized adiabatic theorem proven in Theorem~\ref{thm:Adiabatic}.
}
\label{fig:Summary}
\end{figure}

\bigskip
The main objective of the present article is to unify and at the same time generalize the above-mentioned continuous manifestations of the QZD in two different but inter-related formulations: (i) strong continuous coupling/fast oscillations and (ii) strong damping. With a slight abuse of language both situations are often referred to as ``strong coupling''.
We shall prove three theorems. The first one is an adiabatic theorem (Theorem~\ref{thm:Adiabatic}), that generalizes the adiabatic theorem by Kato~\cite{ref:KatoAdiabatic}, by relaxing the condition of the unitarity of the evolution.
The unified framework will then enable us to study the QZD in which the two different mechanisms (i) and (ii) are present at the same time (Theorem~\ref{thm:doublepaolo}), and generalize the QZD to nonunitary (Markovian) evolutions, with Markovian generators of the Gorini-Kossakowski-Lindblad-Sudarshan (GKLS) form~\cite{ref:DynamicalMap-Alicki,ref:GKLS-DariuszSaverio}.
These results are summarized in Fig.~\ref{fig:Summary}.
We shall also make some observations on the control of nonunitary dynamics by continuous QZD, including a no-go theorem for the cancellation of any Markovian decay (Theorem~\ref{nogolindblad}).

\section{Unification and Generalization of the Continuous QZDs}
\label{sec:MainResults}
The main result of this article is the unification and the generalization of the continuous QZDs, via (i) strong continuous coupling/fast oscillations and (ii) strong damping.
Notice that both evolutions (\ref{eqn:StrongCoupling}) and (\ref{eqn:QZEStrongDampUnitary})  
are of the form $\rme^{t(\gamma B+C)}$, and one is interested in their behavior for large values of the coupling $\gamma$, i.e.\ $\gamma \to +\infty$.
As noted in Refs.\ \cite{ref:QZS,ref:PaoloAdiabatic}, this is nothing but an adiabatic limit. Indeed, by going in the $C$ interaction picture, 
\begin{equation}
B(t) = \rme^{-tC} B  \rme^{tC},
\label{eq:cbc}
\end{equation}
the evolution $V_\gamma (t) = \rme^{-tC} \rme^{t(\gamma B + C)}$ is the solution of the equation
\begin{equation}
\dot V_\gamma (t) = \gamma B(t)  V_\gamma (t),  
\label{eq:Vg}
\end{equation}
with $V_\gamma (0) =1$. This has exactly the same form as an adiabatic evolution 
$\dot U_T (s) = T A(s) U_T (s)$
\cite{ref:KatoAdiabatic, ref:Messiah}, with the physical time and the large-coupling limit $\gamma \to +\infty$ playing the role of the scaled time $s=t/T$ and the large-time limit $T \to +\infty$, respectively.

Therefore, one is led to prove an extension to semigroups of the adiabatic theorem by Kato~\cite{ref:KatoAdiabatic}, who only considered unitary evolutions, i.e.~the case where $B$ is skew-Hermitian (Hamiltonian multiplied by the imaginary unit). 
Although this is a substantial generalization in terms of the generators allowed---including arbitrary matrices---it is rather specific in terms of their time-dependence, which is assumed to be isospectral, since 
$\mathop{\mathrm{spec}}(B(t)) = \mathop{\mathrm{spec}}(B)$. This distinguishes our work from other adiabatic theorems for open systems (e.g.~\cite{martin,schmid}).

It is worth noticing an interesting connection between the adiabatic theorem and the spectral averaging by isometries introduced by Davies~\cite{Davies,Davies80}  in his derivation of the master equation as a weak-coupling limit. 
This connection is somehow known in quantum optics, where, in the delicate derivation on physical grounds of the correct master equation for a small system interacting with a reservoir, one needs to resort to an adiabatic secular approximation~\cite[Sec.~IV.B]{CohenTannoudji}.
The spectral averaging, also known as Davies' device~\cite[Sec.~XVII.6]{vanKampen}, is in fact a rigorous way to implement the secular approximation. 

In his derivation Davies uses a fixed-point theorem, while here we find it more convenient to follow a proof technique introduced by Kato~\cite{ref:KatoAdiabatic},  that makes use of the reduced resolvent, and is more suitable to generalization to nonunitary evolutions.

We now state our results.
A few notions on quantum semigroups, which are useful for our analysis, are summarized in Appendix \ref{qsemi}, where we also review a result obtained in Ref.\ \cite{ref:ControlDecoZeno}.
\begin{thm}[Generalized adiabatic theorem]\label{thm:Adiabatic}
Let $B$ and $C$ be  operators on a finite-dimensional space. Assume that $\rme^{tB}$ is bounded for $t\geq 0$ (which is equivalent to requiring the spectrum of $B$ to be confined in the closed left half-plane $\mathbb{C}_-$, whose purely imaginary eigenvalues are semisimple), while $C$ can be an arbitrary matrix. 
Let  $\{b_k\}$ be the spectrum of $B$ and $\{P_k\}$ its spectral projections. Define
\begin{equation}
C_Z=\sum_{b_k \in \rmi \mathbb{R}} P_kCP_k,\qquad
P_\varphi=\sum_{b_k \in \rmi \mathbb{R}} P_k,
\label{eq:HZ}
\end{equation}
the Zeno projection of $C$ and the peripheral projection of $B$ (spectral projection on the purely imaginary eigenvalues of $B$), respectively.
Then, we have
\begin{equation}
\rme^{t(\gamma B+C)}=\rme^{t\gamma B}\rme^{t C_Z}+\mathcal{O}(1/\gamma) = \rme^{t C_Z } \rme^{t\gamma   B } +\mathcal{O}(1/\gamma)\quad\text{as}\quad\gamma\to +\infty ,
\label{eq:scl.0}
\end{equation}
uniformly in $t$ on compact intervals of $[0,+\infty)$. Moreover, we have
\begin{equation}
\rme^{t(\gamma B+C)}=\rme^{t\gamma B}\rme^{t C_Z}P_\varphi +\mathcal{O}(1/\gamma) = \rme^{t C_Z } \rme^{t\gamma   B }P_\varphi +\mathcal{O}(1/\gamma)\quad\text{as}\quad\gamma\to +\infty ,
\label{eq:scl}
\end{equation}
uniformly in $t$ on compact intervals of $(0,+\infty)$.
\end{thm}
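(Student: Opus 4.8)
The plan is to prove \eqref{eq:scl.0} first and then deduce \eqref{eq:scl} from it. Two preliminary observations organize everything. Since the peripheral eigenvalues of $B$ are semisimple, the nilpotent parts of $B$ sit entirely in the non-peripheral blocks, and a short computation gives $[B,C_Z]=0$; hence the exponentials commute, $\rme^{t(\gamma B+C_Z)}=\rme^{t\gamma B}\rme^{tC_Z}=\rme^{tC_Z}\rme^{t\gamma B}$, which already accounts for the two equivalent orderings in both displays, so that \eqref{eq:scl.0} is equivalent to the single statement $\rme^{t(\gamma B+C)}=\rme^{t(\gamma B+C_Z)}+\mathcal{O}(1/\gamma)$. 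Moreover, because $C_Z=P_\varphi C_Z P_\varphi$ annihilates the range of $1-P_\varphi$, one has $\rme^{tC_Z}=\rme^{tC_Z}P_\varphi+(1-P_\varphi)$, whence $\rme^{t\gamma B}\rme^{tC_Z}=\rme^{t\gamma B}\rme^{tC_Z}P_\varphi+\rme^{t\gamma B}(1-P_\varphi)$. On the range of $1-P_\varphi$ the spectrum of $B$ has strictly negative real part, so $\|\rme^{t\gamma B}(1-P_\varphi)\|\le c\,(1+(\gamma t)^{d})\rme^{-\delta\gamma t}$ with $\delta=\min_{b_k\notin\rmi\mathbb{R}}|\Re b_k|>0$; this is exponentially small, uniformly for $t$ in any compact subset of $(0,+\infty)$. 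Consequently \eqref{eq:scl} follows from \eqref{eq:scl.0} on compacts of $(0,+\infty)$, and it remains only to establish \eqref{eq:scl.0}.

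For the core estimate I would use the exact Duhamel identity
\begin{equation}
\rme^{t(\gamma B+C)}-\rme^{t(\gamma B+C_Z)}=\int_0^t \rme^{(t-s)(\gamma B+C)}\,(C-C_Z)\,\rme^{s\gamma B}\rme^{sC_Z}\,\d s .
\end{equation}
The prefactor $\rme^{(t-s)(\gamma B+C)}$ must be controlled uniformly in $\gamma$: writing $M=\sup_{\tau\ge0}\|\rme^{\tau B}\|<\infty$ (finite by hypothesis) and expanding once more, $\rme^{u(\gamma B+C)}=\rme^{u\gamma B}+\int_0^u\rme^{(u-s)\gamma B}C\,\rme^{s(\gamma B+C)}\d s$, one gets $\|\rme^{u(\gamma B+C)}\|\le M+M\|C\|\int_0^u\|\rme^{s(\gamma B+C)}\|\,\d s$ (using $\|\rme^{v\gamma B}\|\le M$ for $v\ge0$), whence Gr\"onwall yields $\|\rme^{u(\gamma B+C)}\|\le M\rme^{M\|C\|u}$, uniformly in $\gamma\ge0$ and in $u$ on compacts. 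With this bound the whole problem reduces to showing that the integral is $\mathcal{O}(1/\gamma)$, for which I split $C-C_Z=Q_{\mathrm{od}}+Q_{\mathrm{nd}}$ into its off-diagonal part $Q_{\mathrm{od}}=\sum_{j\neq k}P_jCP_k$ and its non-peripheral diagonal part $Q_{\mathrm{nd}}=\sum_{b_k\notin\rmi\mathbb{R}}P_kCP_k$ (these exhaust $C-C_Z$).

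The two pieces are killed by the two distinct mechanisms underlying the theorem. For $Q_{\mathrm{nd}}$ one uses decay: since $Q_{\mathrm{nd}}$ maps into non-peripheral blocks and commutes block-wise with $\rme^{s\gamma B}$, $\|Q_{\mathrm{nd}}\,\rme^{s\gamma B}\|\le c\,(1+(\gamma s)^d)\rme^{-\delta\gamma s}$, and the substitution $r=\gamma s$ gives $\int_0^t(1+(\gamma s)^d)\rme^{-\delta\gamma s}\d s=\tfrac1\gamma\int_0^{\gamma t}(1+r^d)\rme^{-\delta r}\d r=\mathcal{O}(1/\gamma)$. For $Q_{\mathrm{od}}$ one uses averaging through the reduced resolvent: because the eigenvalues are distinct across distinct blocks, $\ad_B$ is invertible on the off-diagonal subspace $\bigoplus_{j\neq k}P_j(\cdot)P_k$, so there is a bounded $X=(\ad_B)^{-1}Q_{\mathrm{od}}$ solving $[B,X]=Q_{\mathrm{od}}$ (on the peripheral blocks simply $P_jCP_k/(b_j-b_k)$, with nilpotent corrections elsewhere). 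One integration by parts in $s$ then gives
\begin{equation}
\gamma\!\int_0^t\!\rme^{(t-s)(\gamma B+C)}Q_{\mathrm{od}}\rme^{s\gamma B}\rme^{sC_Z}\d s=-\big[\rme^{(t-s)(\gamma B+C)}X\rme^{s\gamma B}\rme^{sC_Z}\big]_0^t+\!\int_0^t\!\rme^{(t-s)(\gamma B+C)}\big(XC_Z-CX\big)\rme^{s\gamma B}\rme^{sC_Z}\d s ,
\end{equation}
expressing the off-diagonal contribution as $1/\gamma$ times boundary and remainder terms that are uniformly bounded on compacts by the Gr\"onwall bound, by $M$, and by $\sup_{[0,t]}\|\rme^{sC_Z}\|$. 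Collecting the two estimates yields \eqref{eq:scl.0}, uniformly on compacts of $[0,+\infty)$ (at $t=0$ both sides are the identity and the integral vanishes).

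I expect the main difficulty to be bookkeeping rather than conceptual. One must verify that the reduced-resolvent construction and the integration by parts survive the nilpotent (defective) structure of the non-peripheral blocks — that $\ad_B$ is genuinely invertible off-diagonal with a bounded inverse, that $[B,X]=Q_{\mathrm{od}}$ produces exactly the stated remainder, and that every boundary term stays uniformly bounded up to the endpoints of the time interval — so that the error is truly $\mathcal{O}(1/\gamma)$ \emph{uniformly} on compacts (in $[0,+\infty)$ for \eqref{eq:scl.0} and in $(0,+\infty)$ for \eqref{eq:scl}, the restriction to $t>0$ being forced only by the transient $\rme^{t\gamma B}(1-P_\varphi)$).
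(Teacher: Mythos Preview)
Your proof is correct and rests on the same ingredients as the paper's: the Duhamel identity, the Gr\"onwall bound $\|\rme^{u(\gamma B+C)}\|\le M\rme^{uM\|C\|}$, Kato's reduced-resolvent/integration-by-parts trick for the oscillatory piece, and the exponential decay of $\rme^{s\gamma B}$ on the non-peripheral range. The organization, however, is different and arguably tidier. The paper projects the whole evolution on the right by each $P_\ell$ and treats peripheral $\ell$ (via the reduced resolvent $S_\ell$ with $(B-b_\ell I)S_\ell=I-P_\ell$) and non-peripheral $\ell$ (via pure decay of $\rme^{s\gamma B}P_\ell$) separately, summing at the end. You instead split the perturbation once, $C-C_Z=Q_{\mathrm{od}}+Q_{\mathrm{nd}}$, solve $[B,X]=Q_{\mathrm{od}}$ globally on the full off-diagonal subspace (this $X$ restricted to a peripheral $P_\ell$ is exactly the paper's $S_\ell C P_\ell$), and use decay only for the diagonal non-peripheral piece. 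Your packaging makes the two mechanisms---averaging and damping---explicit at the level of the decomposition of $C-C_Z$, while the paper's eigenprojection-by-eigenprojection argument is what yields the more detailed constants in its explicit bound~\eqref{eqn:BoundAdiabatic}. The small caveat that $Q_{\mathrm{nd}}$ need not literally commute with $\rme^{s\gamma B}$ is harmless, since what you actually use is $Q_{\mathrm{nd}}=Q_{\mathrm{nd}}(I-P_\varphi)$ together with the decay of $\rme^{s\gamma B}(I-P_\varphi)$.
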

\begin{proof}
The proof is given in Appendix~\ref{app:Proof}.	
\end{proof}

\begin{remark}
By gathering the corrections obtained in the proof in Appendix~\ref{app:Proof}, the distance to the adiabatic limit can be explicitly bounded e.g.~as
\begin{align}
\|\rme^{t(\gamma B + C)}-\rme^{t\gamma B}\rme^{tC_Z}P_\varphi\|
\le
\frac{1}{\gamma}\,\Biggl(&
(M+1)\sum_{b_\ell\in\rmi\mathbb{R}}\|S_\ell C P_\ell\|
\frac{M\|C\|\rme^{tM\|C\|}-\|C_Z\|\rme^{t\|C_Z\|}}{M\|C\|-\|C_Z\|}
\nonumber\\
&\qquad\qquad{}+M\|C\|\rme^{tM\|C\|}\int_0^{\infty}\d s\,\rme^{-\eta s}p(s)
\Biggr)
+\rme^{-\gamma\eta t}p(\gamma t),
\label{eqn:BoundAdiabatic}
\end{align}
for any $t\ge0$, where $M\,(\ge1)$ and $p(t)$ are some constant and some positive polynomial, respectively, such that
\begin{equation}
\|\rme^{tB}\|\le M,\qquad
\|\rme^{tB}(I-P_\varphi)\|\le\rme^{-\eta t}p(t),
\end{equation}
with a dissipative gap $\eta=\min_{b_\ell\notin\rmi\mathbb{R}}|{\Re b_\ell}|$,\footnote{When $B$ only has imaginary eigenvalues we set $\eta=\infty$.} and 
\begin{equation}
S_{\ell}
=\sum_{k\neq\ell}[(b_k -b_{\ell})I +N_k]^{-1}P_{k}
\label{eqn:Resolvent}
\end{equation}
is the reduced resolvent of $B$ at $b_\ell$\, with $N_k$ being the nilpotent of $B$ belonging to the eigenvalue $b_k$.
This bound (\ref{eqn:BoundAdiabatic}) can be further bounded as 
\begin{equation}
\|\rme^{t(\gamma B + C)}-\rme^{t\gamma B}\rme^{tC_Z}P_\varphi\|
\le
\frac{1}{\gamma}M^2
\left(
\frac{2M}{\Delta}
+\frac{1}{\eta}
\right)
\|C\|\rme^{2tM^2\|C\|}
+M
\rme^{-\gamma\eta t}
\sum_{n=0}^{D-1}\frac{1}{n!}(\gamma\eta t)^n,
\label{eqn:BoundAdiabaticSimpler}
\end{equation}
where the oscillating gap $\Delta=\min_{k\neq\ell}|b_k-b_\ell|$,\footnote{When all eigenvalues of $B$ are identical we set $\Delta=\infty$.} and $M=D\chi_\nu$, with $D$ being the dimension of $B$ and $C$, and $\chi_\nu$ being the condition number of the similarity transformation of $B$ to a Jordan form defined in terms of $\nu=\min(\eta,\Delta)$.
See Appendix~\ref{app:Bounds} for the derivation.
\end{remark}

We now apply the generalized adiabatic theorem (Theorem~\ref{thm:Adiabatic}) to generic GKLS generators, to get the unified and generalized continuous QZD\@.
In Appendix \ref{qsemi} we recall some properties of GKLS generators $\mathcal{L}$, whose exponential $\mathcal{E}_t=\rme^{t\mathcal{L}}$ is a CPTP semigroup for  $t\geq 0$, i.e.~$\mathcal{E}_t\circ \mathcal{E}_s = \mathcal{E}_{t+s}$ for $t,s\geq 0$.
The following is the central result of this article. 
\begin{thm}[QZD by strong-coupling limit]\label{thm:doublepaolo}
Let $\mathcal{L}$ and $\mathcal{D}$ be arbitrary GKLS generators. Then, we
have
\begin{equation}
\rme^{t(\gamma\mathcal{D}+\mathcal{L})}=\rme^{t \gamma \mathcal{D}}\rme^{t\mathcal{L}_Z}+\mathcal{O}(1/\gamma)\quad\text{as}\quad\gamma\to + \infty ,
\end{equation}
uniformly in $t$ on compact intervals of $[0,+\infty)$,
and
\begin{equation}
\rme^{t(\gamma\mathcal{D}+\mathcal{L})}=\rme^{t \gamma \mathcal{D}}\rme^{t\mathcal{L}_Z}\mathcal{P}_\varphi+\mathcal{O}(1/\gamma)\quad\text{as}\quad\gamma\to + \infty ,
\end{equation}
uniformly in $t$ on compact intervals of $(0,+\infty)$,
with
\begin{equation}
\mathcal{L}_Z
=\sum_{\alpha_k\in i\mathbb{R}}\mathcal{P}_k\mathcal{L}\mathcal{P}_k,\qquad
\mathcal{P}_\varphi
=\sum_{\alpha_k\in i\mathbb{R}}\mathcal{P}_k,
\label{eq:Zenosums}
\end{equation}
where $\{\alpha_k\}$ is the spectrum of $\mathcal{D}$,
$\{\mathcal{P}_k\}$ its spectral projections, and $\mathcal{P}_\varphi$ its peripheral projection.
\end{thm}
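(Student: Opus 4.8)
The plan is to derive Theorem~\ref{thm:doublepaolo} as a direct specialization of the generalized adiabatic theorem (Theorem~\ref{thm:Adiabatic}), applied with $B=\mathcal{D}$ and $C=\mathcal{L}$. Once the hypotheses of Theorem~\ref{thm:Adiabatic} are checked for this choice, both displayed limits follow by simply transcribing Eqs.~(\ref{eq:scl.0}) and (\ref{eq:scl}) and reading off the definitions~(\ref{eq:HZ}) in the present notation. Thus the whole task reduces to verifying that a GKLS generator $\mathcal{D}$ qualifies as an admissible ``$B$'', and that $\mathcal{L}$ qualifies as an admissible ``$C$''.

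The condition on $C$ is vacuous: Theorem~\ref{thm:Adiabatic} allows $C$ to be an arbitrary matrix, and a GKLS generator $\mathcal{L}$ is in particular a linear operator on the finite-dimensional space of density matrices. The real content is the condition on $B=\mathcal{D}$, namely that $\rme^{t\mathcal{D}}$ be bounded for $t\ge0$. Since $\mathcal{D}$ is a GKLS generator, $\rme^{t\mathcal{D}}$ is a CPTP map for every $t\ge0$ (Appendix~\ref{qsemi}), hence a contraction in trace norm, and by equivalence of norms in finite dimension $\rme^{t\mathcal{D}}$ is uniformly bounded in $t\ge0$. As stated in Theorem~\ref{thm:Adiabatic}, this boundedness is equivalent to $\mathop{\mathrm{spec}}(\mathcal{D})\subset\mathbb{C}_-$ with semisimple purely imaginary eigenvalues, so the full spectral hypothesis on $B$ is automatically in force. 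These facts about the spectrum of $\mathcal{D}$ and its peripheral part are precisely those collected in Proposition~\ref{eqn:GKLSgenerator}.

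With $B=\mathcal{D}$ and $C=\mathcal{L}$, the spectrum $\{b_k\}$ of $B$ becomes the spectrum $\{\alpha_k\}$ of $\mathcal{D}$, its spectral projections $\{P_k\}$ become $\{\mathcal{P}_k\}$, and the defining sums~(\ref{eq:HZ}) turn into $C_Z=\sum_{\alpha_k\in\rmi\mathbb{R}}\mathcal{P}_k\mathcal{L}\mathcal{P}_k=\mathcal{L}_Z$ and $P_\varphi=\sum_{\alpha_k\in\rmi\mathbb{R}}\mathcal{P}_k=\mathcal{P}_\varphi$, matching Eq.~(\ref{eq:Zenosums}). Substituting into Eqs.~(\ref{eq:scl.0}) and (\ref{eq:scl}) then yields the two asserted limits, with the stated uniformity on compact subintervals of $[0,+\infty)$ and $(0,+\infty)$, respectively.

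The only substantive step is therefore the spectral characterization of GKLS generators used to check the hypothesis on $\mathcal{D}$: that a bounded CPTP semigroup forces the generator's spectrum into $\mathbb{C}_-$ and makes its peripheral (imaginary-axis) eigenvalues semisimple, with no nontrivial Jordan blocks. Semisimplicity is the delicate point, since a nontrivial Jordan block at an imaginary eigenvalue would produce polynomial-in-$t$ growth of $\rme^{t\mathcal{D}}$ and contradict the contraction bound; this is exactly why boundedness of the semigroup is the clean hypothesis to invoke. All of this is established in Appendix~\ref{qsemi}, so that no work beyond the direct application of Theorem~\ref{thm:Adiabatic} remains.
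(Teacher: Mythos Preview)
Your proposal is correct and follows essentially the same route as the paper: apply Theorem~\ref{thm:Adiabatic} with $B=\mathcal{D}$ and $C=\mathcal{L}$, invoking Proposition~\ref{eqn:GKLSgenerator} to verify the spectral hypothesis on $B$. You spell out in slightly more detail why boundedness of the CPTP semigroup forces the required spectral structure, but the argument is the same.
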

\begin{proof}
We just use Theorem~\ref{thm:Adiabatic} for $\mathcal{D}$ and $\mathcal{L}$ in place of $B$ and $C$, respectively. Proposition~\ref{eqn:GKLSgenerator} in Appendix \ref{qsemi} guarantees that the spectral assumptions on the operator $B$ are met.
\end{proof}

\begin{remark}
Since in this case $\rme^{t(\gamma\mathcal{D}+\mathcal{L})}$ and $\rme^{t\gamma\mathcal{D}}\rme^{t\mathcal{L}_Z}\mathcal{P}_\varphi$ are both CPTP and hence they are bounded by some $M\ge1$ as $\|\rme^{t(\gamma\mathcal{D}+\mathcal{L})}\|\le M$ and $\|\rme^{t\gamma\mathcal{D}}\rme^{t\mathcal{L}_Z}\mathcal{P}_\varphi\|\le M$, we can get a simpler bound than Eq.~(\ref{eqn:BoundAdiabatic}):
\begin{multline}
\|\rme^{t(\gamma\mathcal{D}+\mathcal{L})}-\rme^{t\gamma\mathcal{D}}\rme^{t\mathcal{L}_Z}\mathcal{P}_\varphi\|
\le
\frac{1}{\gamma}\,\Biggl(
M\,\biggl\|\sum_{\alpha_\ell\in\rmi\mathbb{R}}\mathcal{S}_\ell\mathcal{L}\mathcal{P}_\ell\biggr\|
\left[
2+Mt\,\Bigl(\|\mathcal{L}\|+\|\mathcal{L}_Z\|\Bigr)
\right]
\\
{}+M\|\mathcal{L}\|\int_0^{\infty}\d s\,\rme^{-\eta s}p(s)
\Biggr)
+\rme^{-\gamma\eta t}p(\gamma t),
\label{eqn:BoundCPTP}
\end{multline}
where $p(s)$ is some positive polynomial bounding $\|\rme^{t\mathcal{D}}(1-\mathcal{P}_\varphi)\|\le\rme^{-\eta t}p(t)$ with $\eta=\min_{\alpha_\ell\notin\rmi\mathbb{R}}|{\Re\alpha_\ell}|$, and $\mathcal{S}_\ell
=\sum_{k\neq\ell}(\alpha_k-\alpha_\ell+\mathcal{N}_k)^{-1}\mathcal{P}_{k}$ 
is the reduced resolvent of $\mathcal{D}$ at $\alpha_\ell$, with $\mathcal{N}_k$ being the nilpotent of $\mathcal{D}$ belonging to the eigenvalue $\alpha_k$. 
\end{remark}

Notice that this theorem unifies the two continuous QZDs, by (i) fast oscillations and (ii) strong damping.
The two mechanisms can be simultaneously present. 
The previously known results (i) and (ii) are corollaries of Theorem~\ref{thm:doublepaolo}.
Let us start with the latter one (ii): 
\begin{corol}[QZD by strong damping~\cite{ref:ZanardiDFS-PRL2014,ref:ZanardiDFS-PRA2015}]\label{prop:StrongDamp2}
Let $\mathcal{L}$ be an arbitrary GKLS generator and $\mathcal{D}$ be a GKLS generator with no persistent oscillations in the long-time limit, 
\begin{equation}
\lim_{t\to+\infty}\rme^{t\mathcal{D}}=\mathcal{P}_\varphi.
\label{eq:4.5}
\end{equation}
Then, we have
\begin{equation}
	\rme^{t(\gamma\mathcal{D}+\mathcal{L})}=\rme^{t \mathcal{L}_Z}\mathcal{P}_\varphi+\mathcal{O}(1/\gamma)\quad\text{as}\quad\gamma\to+\infty,
\end{equation}
uniformly in $t$ on compact intervals of $(0,+\infty)$,
where 
\begin{equation}
\mathcal{L}_Z=\mathcal{P}_\varphi\mathcal{L}\mathcal{P}_\varphi.
\label{eqn:LZStrongDamp}
\end{equation}
\end{corol}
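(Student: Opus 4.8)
The plan is to read the Corollary off directly from the second estimate of Theorem~\ref{thm:doublepaolo}, after showing that the additional hypothesis~(\ref{eq:4.5}) collapses the general Zeno projection and Zeno generator appearing there to the ones claimed here, and that it makes the damping factor $\rme^{t\gamma\mathcal{D}}$ disappear identically.

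First I would translate the assumption $\lim_{t\to+\infty}\rme^{t\mathcal{D}}=\mathcal{P}_\varphi$ into a spectral statement about $\mathcal{D}$. Since $\mathcal{D}$ is a GKLS generator, $\rme^{t\mathcal{D}}$ is a bounded (CPTP) semigroup, so by Proposition~\ref{eqn:GKLSgenerator} its spectrum lies in $\mathbb{C}_-$ with semisimple peripheral eigenvalues. Boundedness already forbids Jordan blocks on the imaginary axis, and the \emph{existence} of the limit additionally rules out any nonzero purely imaginary eigenvalue, which would produce undamped oscillations $\rme^{\rmi\omega t}\mathcal{P}_k$. Hence the peripheral spectrum of $\mathcal{D}$ is exactly $\{0\}$, with $0$ semisimple, the peripheral projection $\mathcal{P}_\varphi=\sum_{\alpha_k\in\rmi\mathbb{R}}\mathcal{P}_k$ of Theorem~\ref{thm:doublepaolo} reduces to the eigenprojection onto $\ker\mathcal{D}$, and this projection coincides with $\lim_{t\to+\infty}\rme^{t\mathcal{D}}$. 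Consequently the Zeno generator of Theorem~\ref{thm:doublepaolo} collapses to $\mathcal{L}_Z=\sum_{\alpha_k\in\rmi\mathbb{R}}\mathcal{P}_k\mathcal{L}\mathcal{P}_k=\mathcal{P}_\varphi\mathcal{L}\mathcal{P}_\varphi$, which is precisely Eq.~(\ref{eqn:LZStrongDamp}).

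Next I would exploit semisimplicity of the eigenvalue $0$, which gives $\mathcal{D}\mathcal{P}_\varphi=0$ and hence $\rme^{t\gamma\mathcal{D}}\mathcal{P}_\varphi=\mathcal{P}_\varphi$ for every $t\ge0$ and every $\gamma$ (only the $n=0$ term of the exponential series survives). Moreover, from $\mathcal{L}_Z=\mathcal{P}_\varphi\mathcal{L}_Z=\mathcal{L}_Z\mathcal{P}_\varphi$ the projection $\mathcal{P}_\varphi$ commutes with $\rme^{t\mathcal{L}_Z}$ and satisfies $\mathcal{P}_\varphi\,\rme^{t\mathcal{L}_Z}\mathcal{P}_\varphi=\rme^{t\mathcal{L}_Z}\mathcal{P}_\varphi$. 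Combining the two facts,
\[
\rme^{t\gamma\mathcal{D}}\rme^{t\mathcal{L}_Z}\mathcal{P}_\varphi
=\rme^{t\gamma\mathcal{D}}\mathcal{P}_\varphi\,\rme^{t\mathcal{L}_Z}\mathcal{P}_\varphi
=\mathcal{P}_\varphi\,\rme^{t\mathcal{L}_Z}\mathcal{P}_\varphi
=\rme^{t\mathcal{L}_Z}\mathcal{P}_\varphi ,
\]
so the damping exponential drops out exactly, with no effect on the error term.

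Finally I would substitute this identity into the second estimate of Theorem~\ref{thm:doublepaolo}, yielding $\rme^{t(\gamma\mathcal{D}+\mathcal{L})}=\rme^{t\mathcal{L}_Z}\mathcal{P}_\varphi+\mathcal{O}(1/\gamma)$ with the same uniformity on compact subintervals of $(0,+\infty)$, which is exactly the assertion. The entire argument is formal once the spectral reduction is in place, so the only genuine obstacle is the first step: justifying rigorously that convergence of the bounded semigroup forces the peripheral spectrum to be $\{0\}$ with $0$ semisimple. This is precisely what the GKLS structure recalled in Proposition~\ref{eqn:GKLSgenerator} supplies, so no further work is needed.
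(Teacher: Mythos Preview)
Your proof is correct and follows essentially the same route as the paper: identify that assumption~(\ref{eq:4.5}) forces the peripheral spectrum of $\mathcal{D}$ to be $\{0\}$, so the sums in Eq.~(\ref{eq:Zenosums}) collapse to the single term $\mathcal{L}_Z=\mathcal{P}_\varphi\mathcal{L}\mathcal{P}_\varphi$, and then apply Theorem~\ref{thm:doublepaolo}. You are more explicit than the paper in spelling out why the factor $\rme^{t\gamma\mathcal{D}}$ drops (via $\mathcal{D}\mathcal{P}_\varphi=0$ and the commutation of $\mathcal{P}_\varphi$ with $\rme^{t\mathcal{L}_Z}$), which the paper leaves implicit; this is a welcome clarification rather than a different argument.
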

\begin{proof}
Notice that in this case the assumption~\eqref{eq:4.5} implies that the only peripheral (purely imaginary) eigenvalue  is $\alpha =0$. Thus, the sums in Eq.~\eqref{eq:Zenosums} reduce to single terms, and the Zeno projection of $\mathcal{L}$ in Eq.~\eqref{eq:Zenosums} reduces to  Eq.~(\ref{eqn:LZStrongDamp}).
\end{proof}
This reproduces the previously known result on the QZD by strong damping~\cite{ref:ZanardiDFS-PRL2014,ref:ZanardiDFS-PRA2015}. For a unitary generator $-\rmi\mathcal{H}=-\rmi[H,\bullet]$ with a Hamiltonian $H$ instead of $\mathcal{L}$, this result is further reduced to Eq.~(\ref{eqn:QZEStrongDampUnitary}).

If there remain persistent oscillations in the strong-damping (long-time) limit of $\rme^{t\gamma \mathcal{D}}$, the projected generator in Eq.~(\ref {eqn:LZStrongDamp}) is further projected by the fast persistent oscillations. This situation can be treated by Theorem~\ref{thm:doublepaolo}.

On the other hand, the QZD (i), that is merely obtained by fast oscillations for unitary evolution, is reproduced by the following corollary:
\begin{corol}[QZD by fast oscillations I: projecting a Hamiltonian~\cite{ref:QZS,ref:ControlDecoZeno,ref:PaoloSaverio-QZEreview-JPA}]\label{thm:QZEFacchi}
Consider the unitary generators $-\rmi\mathcal{H}=-\rmi[H,\bullet]$ and $-\rmi\mathcal{K}=-\rmi[K,\bullet]$ with Hamiltonians $H$ and $K$. Let $K=\sum_n\varepsilon_n P_n$ be the spectral representation of $K$, with the spectrum $\{\varepsilon_n\}$ and the spectral projections $\{P_n\}$.
Then, we have
\begin{equation}
\rme^{-\rmi t(\gamma\mathcal{K}+\mathcal{H})}
=\rme^{-\rmi t\gamma \mathcal{K}}\rme^{-\rmi t\mathcal{H}_Z}+\mathcal{O}(1/\gamma)\quad\text{as}\quad \gamma\to \pm\infty,
\label{eq:sclthmunitary}
\end{equation}
uniformly in $t$ on compact intervals of $\mathbb{R}$,
with 
	\begin{equation}
		\mathcal{H}_Z
		=[\mathcal{P}_0(H),\bullet],
	\end{equation}
where
\begin{equation}
\mathcal{P}_0=\sum_nP_n\bullet P_n
\label{eqn:ProjectorForUnitary0}
\end{equation}
is the spectral projection onto the kernel of $\mathcal{K}$.
\end{corol}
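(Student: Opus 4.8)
The plan is to obtain this corollary as a direct specialization of the generalized adiabatic theorem (Theorem~\ref{thm:Adiabatic}). I would set $B=-\rmi\mathcal{K}$ and $C=-\rmi\mathcal{H}$, so that $\rme^{-\rmi t(\gamma\mathcal{K}+\mathcal{H})}=\rme^{t(\gamma B+C)}$. Since $K$ is Hermitian, $\mathcal{K}=[K,\bullet]$ is diagonalizable with real eigenvalues $\omega=\varepsilon_n-\varepsilon_m$; hence every eigenvalue of $B=-\rmi\mathcal{K}$ is purely imaginary and semisimple, and $\rme^{tB}=\rme^{-\rmi t\mathcal{K}}$ is unitary, in particular bounded for $t\ge0$. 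The spectral hypotheses of Theorem~\ref{thm:Adiabatic} are therefore met. Because the whole spectrum of $B$ lies on the imaginary axis, the peripheral projection is trivial, $P_\varphi=I$, so the two displayed estimates of Theorem~\ref{thm:Adiabatic} collapse to the single statement~\eqref{eq:sclthmunitary}, valid a priori uniformly on compact intervals of $[0,+\infty)$.

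The main step is to identify the Zeno generator $C_Z=-\rmi\sum_\omega\mathcal{P}_\omega\mathcal{H}\mathcal{P}_\omega$, where $\mathcal{P}_\omega$ are the spectral projections of $\mathcal{K}$ (all of them, since $P_\varphi=I$). I would compute this block by block using the eigenstructure $\mathcal{P}_\omega(X)=\sum_{\varepsilon_n-\varepsilon_m=\omega}P_nXP_m$. Expanding $\mathcal{H}(X)=HX-XH$ and projecting, only the ``diagonal'' pieces $P_nHP_n$ on each side survive the second application of $\mathcal{P}_\omega$, so that $\mathcal{P}_\omega\mathcal{H}\mathcal{P}_\omega(\rho)=\sum_{\varepsilon_n-\varepsilon_m=\omega}(P_nHP_n\,\rho\,P_m-P_n\,\rho\,P_mHP_m)$. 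Summing over all $\omega$ and using $\sum_nP_n=I$ then yields $\sum_\omega\mathcal{P}_\omega\mathcal{H}\mathcal{P}_\omega=[H_Z,\bullet]$ with $H_Z=\sum_nP_nHP_n=\mathcal{P}_0(H)$, i.e.\ $C_Z=-\rmi\mathcal{H}_Z$. This block-diagonal reduction---showing that the superoperator Zeno projection of $\mathcal{H}$ collapses to the commutator with the operator Zeno Hamiltonian $\mathcal{P}_0(H)$, even though the sum formally runs over every eigenvalue of $\mathcal{K}$---is the part I expect to require the most care.

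Finally, I would upgrade the convergence to $\gamma\to\pm\infty$ uniformly on compact intervals of all of $\mathbb{R}$, exploiting unitarity. For $\gamma\to-\infty$ the same argument applies verbatim with $B'=+\rmi\mathcal{K}$ in place of $B$: it satisfies the identical hypotheses and has the same spectral projections, hence the same $C_Z=-\rmi\mathcal{H}_Z$. To pass from $t\ge0$ to $t\le0$ I would take inverses: $\rme^{-\rmi t(\gamma\mathcal{K}+\mathcal{H})}$ and its approximant $\rme^{-\rmi t\gamma\mathcal{K}}\rme^{-\rmi t\mathcal{H}_Z}$ are both unitary with uniformly bounded inverses, so the $\mathcal{O}(1/\gamma)$ bound is preserved under inversion. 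Here I would use that $[K,H_Z]=0$, whence $[\mathcal{K},\mathcal{H}_Z]=0$, so the two exponentials commute and the approximant retains the stated form $\rme^{-\rmi t\gamma\mathcal{K}}\rme^{-\rmi t\mathcal{H}_Z}$ after inversion and reordering. Collecting the sign combinations gives the claim uniformly on compact intervals of $\mathbb{R}$.
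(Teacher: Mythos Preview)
Your proposal is correct and follows essentially the same route as the paper: apply the adiabatic theorem with $B=\mp\rmi\mathcal{K}$, note that the peripheral projection is the identity, and then verify by direct computation with the spectral projections $\mathcal{P}_\omega=\sum_{\varepsilon_n-\varepsilon_m=\omega}P_n\bullet P_m$ that $\sum_\omega\mathcal{P}_\omega\mathcal{H}\mathcal{P}_\omega=[\mathcal{P}_0(H),\bullet]$. The paper routes this through Theorem~\ref{thm:doublepaolo} and Lemma~\ref{lemma:Liouvillian} rather than Theorem~\ref{thm:Adiabatic} directly, but since Theorem~\ref{thm:doublepaolo} is itself an immediate specialization of Theorem~\ref{thm:Adiabatic}, this is not a substantive difference; indeed the Remark following the corollary explicitly notes your direct route.

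The only point where you diverge is the passage to $t\le0$: the paper simply reapplies the theorem with $\mathcal{D}=\rmi\mathcal{K}$, $\mathcal{L}=\rmi\mathcal{H}$ (i.e.\ it absorbs the sign of $t$ into the generators), whereas you invert using unitarity and then invoke $[\mathcal{K},\mathcal{H}_Z]=0$ to reorder the factors. Both are valid; the paper's sign-flip is slightly more direct since it avoids the commutation step, but your observation that $[K,H_Z]=0$ (hence the approximant is a genuine one-parameter group) is a nice structural point in its own right.
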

\begin{proof}
We just use Theorem~\ref{thm:doublepaolo} with $\mathcal{D} = -\rmi\mathcal{K}$ and $\mathcal{L}=-\rmi\mathcal{H}$ for $t\geq0$, and with $\mathcal{D} = \rmi\mathcal{K}$ and $\mathcal{L}=\rmi\mathcal{H}$ for $t\leq0$.
Notice that in this case the spectrum of $-\rmi\mathcal{K}$ consists only of purely imaginary eigenvalues and the projection onto the peripheral spectrum of $-\rmi\mathcal{K}$ is simply the identity, $\mathcal{P}_\varphi=\sum_k\mathcal{P}_k=1$. 
By making use of Lemma~\ref{lemma:Liouvillian} in Appendix \ref{qsemi},
the spectral projections $\{\mathcal{P}_k\}$ of the GKLS generator $\mathcal{K}$ can be expressed in terms of the spectral projections $\{P_n\}$ of the Hamiltonian $K$, as 
\begin{equation}
\mathcal{P}_k = \sum_{m,n} \delta_{\omega_k, \varepsilon_m-\varepsilon_n} P_m \bullet P_n.\label{eqn:Pksum}
\end{equation}
Therefore, according to Eq.~(\ref{eq:Zenosums}), the generator $\mathcal{H}$ is projected to 
	\begin{align}
		\mathcal{H}_Z
		=\sum_k\mathcal{P}_k\mathcal{H}\mathcal{P}_k
		&=\sum_k \sum_{m,n} \sum_{m',n'} 
		\delta_{\omega_k, \varepsilon_m-\varepsilon_n} \delta_{\omega_k,\varepsilon_{m'}-\varepsilon_{n'}}
		P_m
		[H,
		P_{m'}
		\bullet
		P_{n'}
		]
		P_n
		\nonumber\displaybreak[0]\\
		&=\sum_k\sum_{m,n}
		\delta_{\omega_k, \varepsilon_m-\varepsilon_n}
		(
		P_m
		H
		P_m
		\bullet
		P_n
		-
		P_m
		\bullet
		P_n
		H
		P_n
		)
=[\mathcal{P}_0(H),\bullet].
	\end{align}
Note that, for any pairs $(m,n)$ and $(m',n')$ with a common $\omega_k=\varepsilon_m-\varepsilon_n=\varepsilon_{m'}-\varepsilon_{n'}$, the coincidence $m=m'$ implies $n=n'$, and vice versa, as $\varepsilon_m\neq \varepsilon_n$ for $m\neq n$, and in the last equality we used $\sum_k\delta_{\omega_k, \varepsilon_m-\varepsilon_n}=1$. 
\end{proof}
\begin{remark}
The Hamiltonian $H$ is projected by $\mathcal{P}_0$.
This reproduces the previously known result on the QZD by fast oscillations in Eq.~(\ref{eqn:StrongCoupling})~\cite{ref:QZS,ref:ControlDecoZeno,ref:PaoloSaverio-QZEreview-JPA}.
The unitary equivalent to Eq.~(\ref{eq:sclthmunitary}) acting on state vectors in the Hilbert space (instead of density matrices in the Liouville space) can be directly obtained as
\begin{equation}
\rme^{-\rmi t(\gamma K+H)}=\rme^{-\rmi t \gamma K}\rme^{-\rmi t H_Z}+\mathcal{O}(1/\gamma)\quad\text{as}\quad \gamma\to\pm\infty	,
\end{equation} 
with
\begin{equation}
H_Z=\sum_nP_nHP_n=\mathcal{P}_0(H),
\end{equation}
by using Theorem~\ref{thm:Adiabatic} with the Hamiltonians $\pm \rmi K$ and $\pm\rmi H$ in place of $B$ and $C$, respectively.
\end{remark}

As another corollary, Theorem~\ref{thm:doublepaolo} generalizes the QZD (i) by fast oscillations to projecting a GKLS generator, instead of a Hamiltonian. 
This is new, and is a direct consequence of Theorem~\ref{thm:doublepaolo} and of Lemma~\ref{lemma:Liouvillian}.
\begin{corol}[QZD by fast oscillations II: projecting a GKLS generator]\label{oscillationOnLindblad}	
Consider the unitary generator $-\rmi\mathcal{K}=-\rmi[K,\bullet]$ with a Hamiltonian $K$ and let $\mathcal{L}$ be an arbitrary GKLS generator. Let $\mathcal{K}=\sum_k\omega_k\mathcal{P}_k$ and $K=\sum_n \varepsilon_n P_n$ be the spectral representations of $\mathcal{K}$ and $K$, respectively, with the spectral projections $\{\mathcal{P}_k\}$ and $\{P_n\}$.
Then, we have
\begin{equation}
\rme^{t(-\rmi\gamma\mathcal{K}+\mathcal{L})}=\rme^{-\rmi t\gamma \mathcal{K}}\rme^{t \mathcal{L}_Z}+\mathcal{O}(1/\gamma)\quad\text{as}\quad\gamma\to+\infty,
\label{eq:sclthmLindblad}
\end{equation}
uniformly in $t$ on compact intervals of $[0,+\infty)$,
with 
	\begin{equation}
		\mathcal{L}_Z
		=\sum_k\mathcal{P}_k\mathcal{L}\mathcal{P}_k,
\label{eq:sclLZ}
	\end{equation}
where
\begin{equation}
\mathcal{P}_k=\sum_{m,n} \delta_{\omega_k, \varepsilon_m-\varepsilon_n} P_m\bullet P_n.
\label{eqn:ProjectorForUnitary2}
\end{equation}
\end{corol}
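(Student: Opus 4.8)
The plan is to derive the corollary as an immediate specialization of Theorem~\ref{thm:doublepaolo}, taking $\mathcal{D}=-\rmi\mathcal{K}$ and keeping $\mathcal{L}$ unchanged. First I would check that the hypotheses of Theorem~\ref{thm:doublepaolo} hold for this choice. Since $K$ is a Hamiltonian, the commutator superoperator $\mathcal{K}=[K,\bullet]$ is self-adjoint with real eigenvalues equal to the Bohr frequencies $\varepsilon_m-\varepsilon_n$; hence $-\rmi\mathcal{K}$ is skew-Hermitian, its spectrum $\{-\rmi\omega_k\}$ is purely imaginary and semisimple, and it generates the unitary (in particular CPTP) conjugation $\rho\mapsto\rme^{-\rmi tK}\rho\,\rme^{\rmi tK}$. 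Thus $-\rmi\mathcal{K}$ is a legitimate GKLS generator and the spectral assumptions are satisfied.

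The decisive observation is that, because every eigenvalue of $-\rmi\mathcal{K}$ lies on $\rmi\mathbb{R}$, the peripheral projection is the identity, $\mathcal{P}_\varphi=\sum_k\mathcal{P}_k=1$. Two things follow. On the one hand, the two estimates of Theorem~\ref{thm:doublepaolo} collapse into one and the trailing factor $\mathcal{P}_\varphi$ may be dropped, which yields Eq.~(\ref{eq:sclthmLindblad}); since there are no genuinely decaying modes (no dissipative gap, hence no boundary layer at $t=0$), the convergence holds uniformly on compact intervals of the closed half-line $[0,+\infty)$, as claimed, rather than only on $(0,+\infty)$. On the other hand, the restriction $\alpha_k\in\rmi\mathbb{R}$ in the definition~(\ref{eq:Zenosums}) of the Zeno generator is met for every $k$, so $\mathcal{L}_Z$ reduces to the unrestricted sum $\sum_k\mathcal{P}_k\mathcal{L}\mathcal{P}_k$ of Eq.~(\ref{eq:sclLZ}).

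It then remains only to make the spectral projections $\{\mathcal{P}_k\}$ of $\mathcal{K}$ explicit in terms of the spectral data $\{\varepsilon_n,P_n\}$ of $K$. For this I would invoke Lemma~\ref{lemma:Liouvillian}, which diagonalizes the commutator superoperator: its distinct eigenvalues are the frequencies $\omega_k$ obtained as the values taken by $\varepsilon_m-\varepsilon_n$, with associated eigenprojection $\mathcal{P}_k=\sum_{m,n}\delta_{\omega_k,\varepsilon_m-\varepsilon_n}P_m\bullet P_n$, which is exactly Eq.~(\ref{eqn:ProjectorForUnitary2}); substituting this into $\mathcal{L}_Z$ closes the argument. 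Unlike in Corollary~\ref{thm:QZEFacchi}, no further simplification of $\mathcal{L}_Z$ is available, since $\mathcal{L}$ is a general GKLS generator rather than a commutator, so the proof ends here. I expect no genuine obstacle, as the result is structural; the only point calling for a little care is the upgrade of the uniformity from $(0,+\infty)$ to the closed interval $[0,+\infty)$, and this is not an extra argument but simply a consequence of reading off the first, $\mathcal{P}_\varphi$-free, estimate of Theorem~\ref{thm:doublepaolo}, which is valid on $[0,+\infty)$ precisely because $\mathcal{P}_\varphi=1$ in the present setting.
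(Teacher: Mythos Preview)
Your proposal is correct and follows essentially the same route as the paper: apply Theorem~\ref{thm:doublepaolo} with $\mathcal{D}=-\rmi\mathcal{K}$, observe that the purely imaginary spectrum forces $\mathcal{P}_\varphi=1$ so that the first (closed-interval) estimate of the theorem yields Eq.~(\ref{eq:sclthmLindblad}) with the unrestricted sum~(\ref{eq:sclLZ}), and then invoke Lemma~\ref{lemma:Liouvillian} to obtain the explicit form~(\ref{eqn:ProjectorForUnitary2}) of the $\mathcal{P}_k$. Your discussion of why the uniformity extends to $t=0$ is in fact more explicit than the paper's own proof.
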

\begin{proof}
Use Theorem~\ref{thm:doublepaolo} with $\mathcal{D} = \pm \rmi\mathcal{K}$. Notice that in this case the spectrum of $-\rmi\mathcal{K}$ consists only of purely imaginary eigenvalues and the projection onto the peripheral spectrum of $-\rmi\mathcal{K}$ is simply the identity, $\mathcal{P}_\varphi=\sum_k\mathcal{P}_k=1$.
Then, use again Lemma~\ref{lemma:Liouvillian} in Appendix \ref{qsemi} to express the spectral projections $\{\mathcal{P}_k\}$ of $\mathcal{K}$ in terms of the spectral projections $\{P_n\}$ of $K$.
\end{proof}
\begin{remark}
The QZD by repeated nonselective projective measurements $\mathcal{P}_0=\sum_nP_n \bullet P_n$ 
[see Eq.\ (\ref{eqn:Pdef})] during the evolution $\rme^{t\mathcal{L}}$ with a GKLS generator $\mathcal{L}$ yields $(\mathcal{P}_0\rme^{\frac{t}{n}\mathcal{L}}\mathcal{P}_0)^n\to\rme^{t\mathcal{P}_0\mathcal{L}\mathcal{P}_0}\mathcal{P}_0$ as $n\to+\infty$.
This coincides with Eqs.~(\ref{eq:sclthmLindblad})--(\ref{eq:sclLZ}) by fast oscillations if the exponential of the latter is multiplied by $\mathcal{P}_0$.
Indeed, 
\begin{equation}
\rme^{t\mathcal{P}_0\mathcal{L}\mathcal{P}_0}\mathcal{P}_0=\rme^{t\sum_k\mathcal{P}_k\mathcal{L}\mathcal{P}_k}\mathcal{P}_0.
\label{eqn:notsame}
\end{equation}
However, in Eq.~(\ref{eqn:notsame}), $\mathcal{P}_0\mathcal{L}\mathcal{P}_0$ is not of GKLS form in general, while $\sum_k\mathcal{P}_k\mathcal{L}\mathcal{P}_k$ is, as follows from Eq.~(\ref{eq:sclthmLindblad}) being a CPTP semigroup. The evolution $\rme^{t\mathcal{P}_0\mathcal{L}\mathcal{P}_0}$ is governed by a licit GKLS generator only if restricted to the proper subspace by $\mathcal{P}_0$. The following example helps clarifying it.
\end{remark}
\begin{example}
Consider a qubit Hamiltonian $H=\Omega\ket{0}\bra{0}$ and a dephasing GKLS generator $\mathcal{L}=\mathcal{L}(\ket{+}\bra{+})$, where $\mathcal{L}(L)=-\frac{1}{2}\kappa(L^\dag L\bullet+\bullet L^\dag L-2L\bullet L^\dag)$ and $\ket{+}=(\ket{0}+\ket{1})/\sqrt{2}$.
In this case, the generator projected by the fast oscillations with $H$ reads $\sum_k\mathcal{P}_k\mathcal{L}\mathcal{P}_k=\frac{1}{8}\mathcal{L}(X)+\frac{1}{8}\mathcal{L}(Y)$, which is a GKLS generator, while $\mathcal{P}_0\mathcal{L}\mathcal{P}_0=\frac{1}{8}\mathcal{L}(X)+\frac{1}{8}\mathcal{L}(Y)-\frac{1}{8}\mathcal{L}(Z)$ is not a GKLS generator, where $X=\ket{0}\bra{1}+\ket{1}\bra{0}$, $Y=-\rmi(\ket{0}\bra{1}-\ket{1}\bra{0})$, and $Z=\ket{0}\bra{0}-\ket{1}\bra{1}$ are Pauli operators.
Clearly, the restricted evolution $\rme^{t\mathcal{P}_0\mathcal{L}\mathcal{P}_0}\mathcal{P}_0$ is governed by a GKLS generator. 
\end{example}

\begin{remark}
We also remark that an alternative geometric approach developed in Ref.~\cite{Rouchon} allows for a perturbative Lindbladian description of QZD\@.
\end{remark}

The QZE by fast oscillations (i), however, cannot cancel any decay due to a GKLS generator.
This no-go theorem can be proven with the help of the following proposition, generalizing an argument given in Ref.~\cite{ref:OpenSystemNotControllable}.

We first need the following decomposition.
Let $\mathcal{L}$ be an arbitrary GKLS generator. It can always be decomposed as 
\begin{equation}
\mathcal{L}=-\rmi\mathcal{H}+\mathcal{D},
\end{equation}
where  $-\rmi\mathcal{H}=-\rmi[H,\bullet]$ is its Hamiltonian part and
\begin{equation}
\mathcal{D}
=-\frac{1}{2}\sum_i
(L_i^\dag L_i\bullet+\bullet L_i^\dag L_i-2L_i\bullet L_i^\dag)
\label{eq:dissdef}
\end{equation}
is its dissipative/dephasing part, with the operators $L_i$'s assumed to be traceless, 
\begin{equation}
\tr L_i=0,\quad\forall i.
\label{eqn:Traceless}
\end{equation}
This decomposition is always possible: if $L_i$'s are not traceless, their scalar parts can always be absorbed by the Hamiltonian $H$.

\begin{prop}\label{prop:NoGoLindblad}
Let $\mathcal{L}=-\rmi\mathcal{H}+\mathcal{D}$ be an arbitrary GKLS generator, 
decomposed as above.
Consider the evolution  $\rho(t)=\mathcal{E}_t(\rho)$  of a state $\rho$, where $\mathcal{E}_t=\rme^{ t\mathcal{L}}$ and $t\geq 0$, and examine the evolution of its purity $\tr \rho^2(t)$.
Then, the largest decay rate of the purity, 
\begin{equation}
\Gamma=\sup_{\rho,t}\!\left(-\frac{\d}{\d t}\tr\rho^2(t)\right),
\end{equation}
is independent of the Hamiltonian $H$. Furthermore, 
\begin{equation}
\Gamma=0
\quad\text{iff}\quad
\mathcal{D}=0.
\end{equation}
\end{prop}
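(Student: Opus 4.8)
The plan is to compute the instantaneous rate of purity change directly, observe that the Hamiltonian part cancels algebraically, reduce the double supremum to a supremum over states, and then settle $\Gamma=0\iff\mathcal{D}=0$ by a variance argument on pure states. First I would differentiate: since $\dot\rho(t)=\mathcal{L}(\rho(t))$ and $\rho(t)$ is Hermitian, writing $\rho$ for $\rho(t)$,
\[
\frac{\d}{\d t}\tr\rho^2(t)=2\tr\bigl(\rho\,\mathcal{L}(\rho)\bigr)=-2\rmi\,\tr\bigl(\rho[H,\rho]\bigr)+2\tr\bigl(\rho\,\mathcal{D}(\rho)\bigr).
\]
By cyclicity of the trace $\tr(\rho[H,\rho])=\tr(\rho H\rho)-\tr(\rho^2H)=0$, so the Hamiltonian term drops out identically, and the instantaneous decay rate is $-\frac{\d}{\d t}\tr\rho^2(t)=-2\tr\bigl(\rho(t)\mathcal{D}(\rho(t))\bigr)$, involving only the dissipative part evaluated at the current state.

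Next I would handle the supremum. As $\rho$ runs over all states and $t$ over $[0,+\infty)$, the argument $\rho(t)=\mathcal{E}_t(\rho)$ already exhausts the whole state space at $t=0$; hence
\[
\Gamma=\sup_{\sigma}\bigl(-2\tr(\sigma\mathcal{D}(\sigma))\bigr),
\]
the supremum being over all density matrices $\sigma$. This expression is manifestly independent of $H$, which proves the first assertion.

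For the second assertion I would use cyclicity to write $-2\tr(\sigma\mathcal{D}(\sigma))=2\sum_i\bigl(\tr(\sigma^2L_i^\dagger L_i)-\tr(\sigma L_i\sigma L_i^\dagger)\bigr)$ and evaluate it on pure states $\sigma=\ket{\psi}\bra{\psi}$, where it becomes $2\sum_i\bigl(\|L_i\ket{\psi}\|^2-|\bra{\psi}L_i\ket{\psi}|^2\bigr)\ge0$ by Cauchy--Schwarz. Thus $\Gamma\ge0$ always, and $\mathcal{D}=0$ trivially gives $\Gamma=0$. Conversely, $\Gamma=0$ forces each term $\|L_i\ket{\psi}\|^2-|\bra{\psi}L_i\ket{\psi}|^2$ to vanish for every $\ket{\psi}$ and every $i$; the Cauchy--Schwarz equality condition then makes every vector an eigenvector of each $L_i$, which forces $L_i=c_iI$, and the assumption $\tr L_i=0$ gives $c_i=0$, i.e.\ $L_i=0$ and $\mathcal{D}=0$.

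The calculations are elementary; the two points deserving care are the reduction of the supremum over pairs $(\rho,t)$ to a supremum over states (which relies on the initial state being arbitrary, so $t=0$ already suffices) and the final passage from ``every vector is an eigenvector of $L_i$'' to ``$L_i$ is a multiple of the identity'', where the tracelessness hypothesis is exactly what is needed to conclude $L_i=0$.
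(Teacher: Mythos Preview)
Your proof is correct and follows essentially the same route as the paper's: compute the purity derivative and cancel the Hamiltonian term by cyclicity, reduce the double supremum to a supremum over states by evaluating at $t=0$, then restrict to pure states to get the Cauchy--Schwarz variance bound and use the equality case plus tracelessness to force $L_i=0$. The two arguments are essentially identical in structure and detail.
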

\begin{proof}
By noting the decomposition of $\mathcal{L}$ into the Hamiltonian $-\rmi\mathcal{H}=-\rmi[H,\bullet]$ and the dissipative/dephasing $\mathcal{D}$ parts, we have
\begin{equation}
\frac{\d}{\d t}\tr\rho^2(t)
=
2\tr\!\left(\rho(t)\frac{\d}{\d t}\rho(t)\right)
=2\tr\!\left[\rho(t)\,\Bigl(-\rmi[H,\rho(t)]+\mathcal{D}(\rho(t))\Bigr)\right]
=2\tr[\rho(t)\mathcal{D}(\rho(t))],
\end{equation}
which is independent of $H$. 
Moreover, if $\mathcal{D}=0$, we have $\Gamma=0$.

On the other hand, notice that 
\begin{equation}
\sup_{\rho,t}\left(-\frac{\d}{\d t}\tr\rho^2(t)\right)=\sup_{\rho}\left(\left.-\frac{\d}{\d t}\tr\rho^2(t)\right|_{t=0}\right),
\end{equation} 
because we can always choose $\rho(t)$ as an initial state.
Thus, we have
\begin{equation}
\Gamma
=\sup_{\rho}\!\left(\left.-\frac{\d}{\d t}\tr\rho^2(t)\right|_{t=0}\right)
=2 \sup_{\rho} \Bigl(-\tr[\rho \mathcal{D}(\rho)] \Bigr)
= 2 \sup_{\rho} \sum_i \tr( L_i^\dag L_i \rho^2 - L_i^\dag \rho L_i \rho).
\end{equation}
We can bound $\Gamma$ from below as
\begin{align}
\Gamma 
 = 2 \sup_{\rho} \sum_i \tr( L_i^\dag L_i \rho^2 - L_i^\dag \rho L_i \rho)
&\ge
2 \sup_{\rho=\ket{\psi}\bra{\psi}} \sum_i \tr( L_i^\dag L_i \rho^2 - L_i^\dag \rho L_i \rho)
\nonumber\displaybreak[0]\\
& = 2 \sup_{\ket{\psi} }\sum_i \Bigl(
\bra{\psi}
L_i^\dag L_i
\ket{\psi}
-
\bra{\psi}
L_i^\dag
\ket{\psi}
\bra{\psi}
L_i\ket{\psi}
\Bigr)
\ge 0,
\end{align}
where in the last inequality  we used the Cauchy-Schwarz inequality. 
If $\Gamma=0$, then the equality in the Cauchy-Schwarz inequality should hold.
It holds iff $L_i\ket{\psi}\propto\ket{\psi}$, for all $i$ and $\ket{\psi}$.
All vectors $\ket{\psi}$ are  eigenvectors of $L_i$. 
This implies that $L_i=\lambda_iI$ with some complex value $\lambda_i$ for all $i$.
But, from the assumption~(\ref{eqn:Traceless}) we have  that $\lambda_i =0$, that is  $L_i=0$ for all $i$, and hence $\mathcal{D}=0$.
\end{proof}

\begin{thm}[The QZE by fast oscillations cannot cancel any decay due to GKLS generators]\label{nogolindblad}
If $\mathcal{L}$ is a GKLS generator of nonunitary dynamics, with purity decay rate $\Gamma$, then the projected GKLS generator $\mathcal{L}_Z$ given in Eq.~(\ref{eq:sclLZ}) by fast oscillations also generates nonunitary dynamics with the same purity decay rate $\Gamma$.
\end{thm}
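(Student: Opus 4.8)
The plan is to reduce everything to dissipative parts via Proposition~\ref{prop:NoGoLindblad} and then track what the fast-oscillation projection does to them. Write $\mathcal{L}=-\rmi\mathcal{H}+\mathcal{D}$ with traceless Lindblad operators $\{L_i\}$. By Proposition~\ref{prop:NoGoLindblad} the rate $\Gamma=2\sup_\rho(-\tr[\rho\mathcal{D}(\rho)])$ is insensitive to the Hamiltonian part and vanishes iff $\mathcal{D}=0$, so it suffices to identify the dissipative part of $\mathcal{L}_Z$. By Corollary~\ref{thm:QZEFacchi} the piece $\sum_k\mathcal{P}_k(-\rmi\mathcal{H})\mathcal{P}_k$ stays a commutator $-\rmi\mathcal{H}_Z$, while $\mathcal{D}_Z=\sum_k\mathcal{P}_k\mathcal{D}\mathcal{P}_k$ is again of GKLS form (its exponential is CPTP). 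First I would compute its Lindblad operators explicitly: grouping by Bohr frequency, $\mathcal{D}_Z$ is generated by the secular components $L_i^\omega=\sum_{\varepsilon_m-\varepsilon_n=\omega}P_mL_iP_n$, which remain traceless (for $\omega\neq0$ since $\tr[P_mL_iP_n]=0$ when $m\neq n$, and for $\omega=0$ since $\sum_m\tr[P_mL_iP_m]=\tr L_i=0$). This gives the canonical decomposition $\mathcal{L}_Z=-\rmi\mathcal{H}_Z+\mathcal{D}_Z$ needed to apply Proposition~\ref{prop:NoGoLindblad} to $\mathcal{L}_Z$ as well.

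The nonunitarity (the ``cannot cancel'' content) I would then settle by recognizing $\mathcal{D}_Z$ as the fast-oscillation average $\mathcal{D}_Z=\lim_{T\to+\infty}\frac1T\int_0^T\mathcal{U}_s\mathcal{D}\mathcal{U}_{-s}\,\d s$, where $\mathcal{U}_s=\rme^{\rmi s\mathcal{K}}$ is the purity-preserving conjugation $\rho\mapsto\rme^{\rmi sK}\rho\rme^{-\rmi sK}$. At the level of Kossakowski data, each $\mathcal{U}_s\mathcal{D}\mathcal{U}_{-s}$ carries a unitarily rotated, hence still positive semidefinite, Kossakowski matrix; the average is positive semidefinite and can vanish only if the integrand vanishes for every $s$, in particular at $s=0$, giving $\mathcal{D}=0$. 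Therefore $\mathcal{D}_Z=0$ iff $\mathcal{D}=0$, and Proposition~\ref{prop:NoGoLindblad} applied to both generators yields $\Gamma_Z>0\Leftrightarrow\Gamma>0$: fast oscillations can never turn a genuinely dissipative generator into a unitary one.

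There remains the equality of the two rates, where I expect the real difficulty to sit. Setting $g(\rho)=-\tr[\rho\mathcal{D}(\rho)]$, the averaging representation gives $-\tr[\rho\mathcal{D}_Z(\rho)]=\lim_T\frac1T\int_0^T g(\mathcal{U}_{-s}\rho)\,\d s$, an orbit average of $g$. Since each $\mathcal{U}_{-s}$ maps states bijectively to states, $g(\mathcal{U}_{-s}\rho)\le\Gamma/2$ for all $s$, and averaging yields the easy half $\Gamma_Z\le\Gamma$. The hard part is the reverse bound $\Gamma_Z\ge\Gamma$, i.e.\ that the maximal purity decay is not diluted by the oscillatory average. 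The natural route exploits the Hamiltonian-independence in Proposition~\ref{prop:NoGoLindblad}: the generator $-\rmi\gamma\mathcal{K}+\mathcal{L}$ has purity decay rate exactly $\Gamma$ for every $\gamma$, and by Corollary~\ref{oscillationOnLindblad} its flow converges to $\rme^{-\rmi t\gamma\mathcal{K}}\rme^{t\mathcal{L}_Z}$, whose purity equals that of $\rme^{t\mathcal{L}_Z}$ since the oscillatory prefactor preserves purity. The obstacle is that the convergence is only $\mathcal{O}(1/\gamma)$ in the propagator while the correction oscillates at frequency $\gamma$, so its time-derivative is $\mathcal{O}(1)$ and the instantaneous rate $-\d(\tr\rho^2)/\d t$ is not continuous in the limit; one cannot simply commute the supremum with $\gamma\to+\infty$. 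Making $\Gamma_Z\ge\Gamma$ rigorous therefore demands either a time-averaged/maximizer-tracking estimate, or the sharper structural statement that a maximizer of $g$ can be chosen to commute with $K$ (so that it is fixed by every $\mathcal{U}_s$ and $g_Z=g$ there). Locating and controlling this maximizer is, I expect, the crux of the whole argument.
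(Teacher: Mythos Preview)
Your concern about the limit argument is well-founded; it is in fact the entirety of the paper's proof. The authors invoke Proposition~\ref{prop:NoGoLindblad} to get rate $\Gamma$ for $\rme^{t(-\rmi\gamma\mathcal{K}+\mathcal{L})}$ at every finite $\gamma$, note that the unitary prefactor $\rme^{\rmi t\gamma\mathcal{K}}$ preserves purity, and then pass to the limit, asserting without further argument that $\rme^{t\mathcal{L}_Z}=\lim_{\gamma\to+\infty}\rme^{\rmi t\gamma\mathcal{K}}\rme^{t(-\rmi\gamma\mathcal{K}+\mathcal{L})}$ inherits the same rate---exactly the step you flag as unjustified, for exactly the reason you give.

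Your worry is not a technicality: the equality $\Gamma_Z=\Gamma$ is \emph{false} in general. Take the qubit amplitude-damping dissipator $\mathcal{L}=\mathcal{D}$ with $L=\sqrt{\kappa}\,\ket{0}\bra{1}$ and oscillation Hamiltonian $K=X$. One finds $g(\rho)=\kappa(|\rho_{10}|^2+2\rho_{11}^2-\rho_{11})$ with maximum $\kappa$ at $\rho=\ket{1}\bra{1}$, so $\Gamma=2\kappa$. The secular Lindblad operators are $L^{(0)}=\tfrac{\sqrt\kappa}{2}(\ket{+}\bra{+}-\ket{-}\bra{-})$ and $L^{(\pm2)}=\mp\tfrac{\sqrt\kappa}{2}\ket{\pm}\bra{\mp}$, and in Bloch coordinates $(x,y,z)$ relative to the $\ket{\pm}$ basis a direct computation gives $g_Z(\rho)=\tfrac{\kappa}{8}(3x^2+3y^2+2z^2)$, whose maximum on the Bloch ball is $3\kappa/8$; hence $\Gamma_Z=3\kappa/4<\Gamma$. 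Equivalently, by your own averaging formula, $g_Z(\ket{1}\bra{1})=\kappa\langle\cos^4 s\rangle_s=3\kappa/8$, strictly below $\sup g$. So neither the paper's limit argument nor your proposed $K$-commuting maximizer can salvage $\Gamma_Z\ge\Gamma$. What does go through is exactly what your first two paragraphs establish rigorously: $\mathcal{D}_Z=0\Leftrightarrow\mathcal{D}=0$ (hence $\Gamma_Z>0\Leftrightarrow\Gamma>0$, which is the genuine ``cannot cancel'' content of the title) together with the easy inequality $\Gamma_Z\le\Gamma$.
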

\begin{proof}
Since some nonunitary component is present in $\mathcal{L}$, by Proposition~\ref{prop:NoGoLindblad} the evolution $\rme^{t(-\rmi\gamma\mathcal{K}+\mathcal{L})}$ has a purity decay $\Gamma\neq 0$ independent of $\gamma$ and of the Hamiltonian~$K$. Because $\rme^{\rmi t \gamma\mathcal{K}}$ is unitary, one gets
\begin{equation}
\tr\{[(\rme^{\rmi t \gamma\mathcal{K}}\rme^{t(-\rmi\gamma\mathcal{K}+\mathcal{L})})(\rho)]^2\}
=\tr\{[\rme^{t(-\rmi\gamma\mathcal{K}+\mathcal{L})}(\rho)]^2\}, 
\end{equation}
so that $\rme^{\rmi t \gamma\mathcal{K}}\rme^{t(-\rmi\gamma\mathcal{K}+\mathcal{L})}$ has the same nonvanishing decay rate $\Gamma$ as that of $\rme^{t\mathcal{L}}$ for any $\gamma$, and 
\begin{equation}
\rme^{t \mathcal{L}_Z}=\lim_{\gamma\to+\infty}\rme^{\rmi t \gamma\mathcal{K}}\rme^{t(-\rmi\gamma\mathcal{K}+\mathcal{L})},
\end{equation}
too.
\end{proof}

\section{Example: QZD by Strong Damping and Persistent Oscillations}
\label{sec:Example}
Let us look at an example of the Zeno limit given in Theorem~\ref{thm:doublepaolo}.
We consider two GKLS generators of a three-level system:
\begin{equation}
\mathcal{L}
=-\rmi[K,\bullet]-\frac{1}{2}(L^\dag L\bullet+\bullet L^\dag L-2L\bullet L^\dag)
\end{equation}
with
\begin{gather}
K
=
\Omega_0\ket{0}\bra{0}
+\Omega_1\ket{1}\bra{1}
+\Omega_2\ket{2}\bra{2}
=
\begin{pmatrix}
	\Omega_0&&\\
	&\Omega_1&\\
	&&\Omega_2
\end{pmatrix},\\
L
=
\sqrt{\Gamma}
\,\Bigl(
\ket{1}\bra{1}
+\ket{2}\bra{2}
\Bigr)
=
\sqrt{\Gamma}
\begin{pmatrix}
	0&&\\
	&1&\\
	&&1
\end{pmatrix},
\end{gather}
and 
\begin{equation}
\mathcal{D}
=-\rmi[H,\bullet]-\frac{1}{2}(F^\dag F\bullet+\bullet F^\dag F-2F\bullet F^\dag)
\end{equation}
with
\begin{gather}
H
=
g\,\Bigl(
\ket{0}\bra{1}
+\ket{1}\bra{0}
\Bigr)
+\omega_2\ket{2}\bra{2}
=
\begin{pmatrix}
	0&g&\\
	g&0&\\
	&&\omega_2
\end{pmatrix},
\\
F
=\sqrt{\kappa}\,\ket{1}\bra{2}
=\sqrt{\kappa}
\begin{pmatrix}
	0&&\\
	&0&1\\
	&0&0
\end{pmatrix}.
\end{gather}
We look at the evolution $\rme^{t(\gamma\mathcal{D}+\mathcal{L})}$ with  large $\gamma$.
The former generator $\mathcal{L}$, describing pure dephasing between $\ket{0}$ and the other levels, is projected by the strong action of the latter generator $\mathcal{D}$.
Note that $\rme^{t \gamma \mathcal{D}}$ induces decay from $\ket{2}$ to $\ket{1}$ by $F$, and exhibits persistent oscillations between $\ket{0}$ and $\ket{1}$ at long times.
We restrict ourselves to the case $\kappa>0$ and $g>0$. 
The generator $\mathcal{L}$ is projected by the two mechanisms: the strong amplitude-damping from $\ket{2}$ to $\ket{1}$ and  the fast persistent oscillations between $\ket{0}$ and $\ket{1}$.
Notice that the unitary part and the dissipative part of $\mathcal{D}$ do not commute and the two mechanisms act nontrivially.

Let us first look at the free evolution $\rme^{t\mathcal{D}}$.
It is solved as
\begin{align}
\rme^{t\mathcal{D}}
=
\rme^{-\rmi tH}\,
\biggl[&
\Bigl(
P+\rme^{-\kappa t/2}\ket{2}\bra{2}
\Bigr)\,
\bullet
\,\Bigl(
P+\rme^{-\kappa t/2}\ket{2}\bra{2}
\Bigr)
\nonumber\displaybreak[0]\\
&{}
+
\frac{1}{2}
\,\biggl(
(
1-\rme^{-\kappa t}
)
P
-
\frac{\kappa}{\kappa^2+4g^2}
[
2g-\rme^{-\kappa t}(2g\cos2gt+\kappa\sin2gt)
]
Y
\nonumber\displaybreak[0]\\
&\qquad\qquad\qquad
{}
-
\frac{\kappa}{\kappa^2+4g^2}
[\kappa-\rme^{-\kappa t}(\kappa\cos2gt-2g\sin2gt)]
Z
\biggr)\,
\bra{2}\bullet\ket{2}
\biggr]\,\rme^{\rmi tH},
\end{align}
where $
P=\ket{0}\bra{0}+\ket{1}\bra{1}
$, $
X=\ket{0}\bra{1}+\ket{1}\bra{0}
$, $
Y=-\rmi(\ket{0}\bra{1}-\ket{1}\bra{0})
$, and $
Z=\ket{0}\bra{0}-\ket{1}\bra{1}
$.
As time $t$ goes on, it asymptotically behaves as
\begin{equation}
\rme^{t\mathcal{D}}
\sim
\rme^{-\rmi t\mathcal{H}_\infty}\mathcal{P}_\varphi\quad\text{as}\quad t\to+\infty,
\end{equation}
with the asymptotic unitary generator $-\rmi\mathcal{H}_\infty=-\rmi[gX,\bullet]$ and the projection 
\begin{equation}
\mathcal{P}_\varphi
=
P{}\bullet{}P
+
\frac{1}{2}
\left(
P
-\frac{\kappa}{\kappa^2+4g^2}
(
2gY
+
\kappa Z
)
\right)
\bra{2}\bullet\ket{2}
\end{equation}
onto the peripheral spectrum of $\mathcal{D}$.
The peripheral spectrum of $\mathcal{D}$ consists of three peripheral eigenvalues, $\alpha_0=0$ and $\alpha_\pm=\mp2ig$, with the corresponding eigenprojections given by
\begin{equation}
\begin{cases}
\medskip
\displaystyle
\mathcal{P}_0
=
\frac{1}{2}
[
P\tr(\bullet)
+
X
\tr(X\bullet)
],\\
\displaystyle
\mathcal{P}_\pm
=
\ket{\pm}
\left(
\bra{\pm}
\bullet
\ket{\mp}
-
\frac{1}{2}
\frac{\kappa}{\kappa^2+4g^2}
(
\kappa\pm2ig
)
\bra{2}\bullet\ket{2}
\right)
\bra{\mp},
\end{cases}
\end{equation}
where $\ket{\pm}=(\ket{0}\pm\ket{1})/\sqrt{2}$ are the eigenstates of $X$ belonging to the eigenvalues $\pm1$, respectively.
Then, in the Zeno limit $\gamma\to+\infty$, the generator $\mathcal{L}$ is projected to
\begin{align}
	\mathcal{L}_Z
	&=\sum_{k=0,\pm}\mathcal{P}_k\mathcal{L}\mathcal{P}_k
\nonumber\\
&
=
-\frac{1}{8}\Gamma
\left(
2X
\tr(X\bullet)
+
Y
\tr
(Y
\bullet
)
+Z
\tr
(Z
\bullet
)
-
\frac{\kappa}{\kappa^2+4g^2}
(
\kappa
Z
+
2g
Y
)
\bra{2}\bullet\ket{2}
\right).
\end{align}
See Fig.~\ref{fig:ZenoLimitZontCont} for the convergence to the Zeno dynamics.
\begin{figure}
\centering
\begin{tabular}{r@{\quad}r@{\quad}r}
&
\tiny$gt=1.0$, $\Gamma t=0.0$
&
\tiny$gt=2.0$, $\Gamma t=0.0$
\\[-2.9truemm]
\includegraphics[scale=0.523]{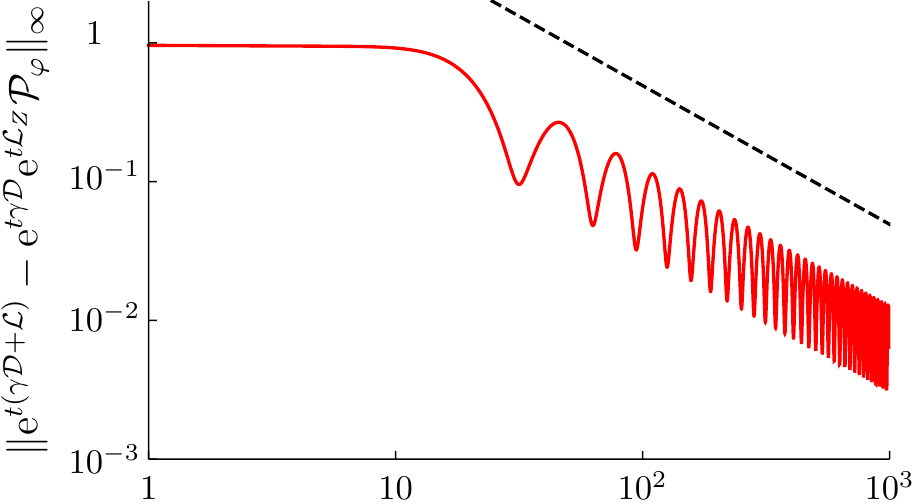}
&
\includegraphics[scale=0.523]{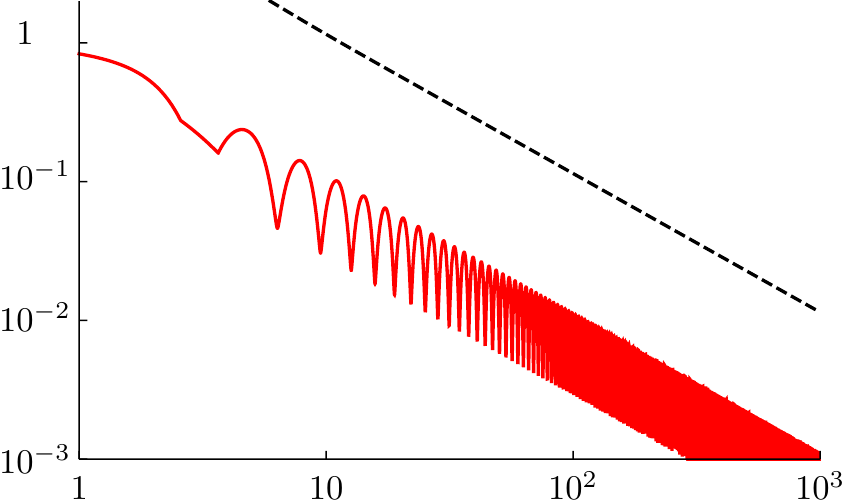}
&
\includegraphics[scale=0.523]{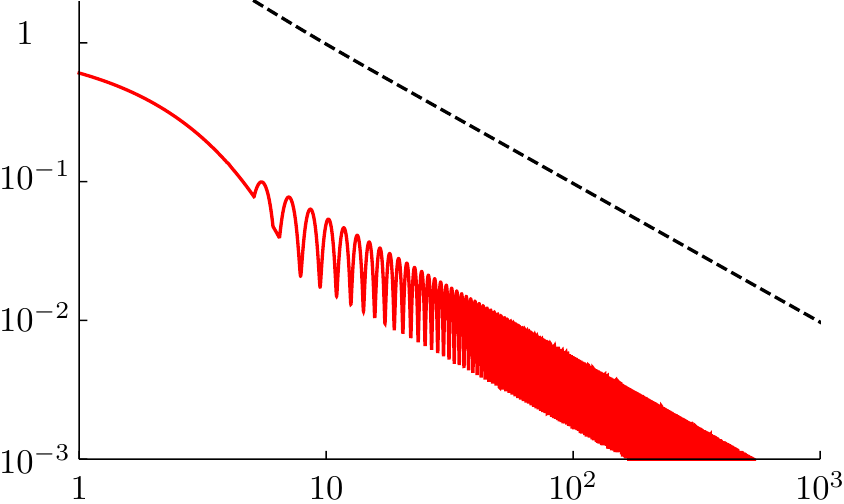}
\\[-9truemm]
\tiny$gt=0.1$, $\Gamma t=0.0$\hspace*{18truemm}
&
&
\\[4.2truemm]
&
\tiny$gt=1.0$, $\Gamma t=2.0$
&
\tiny$gt=2.0$, $\Gamma t=2.0$
\\[-2.9truemm]
\includegraphics[scale=0.522]{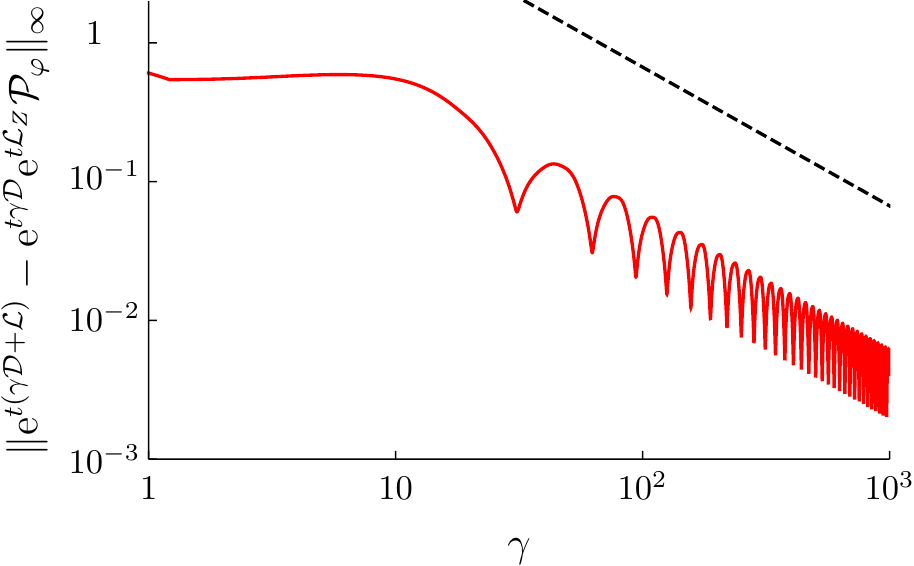}
&
\includegraphics[scale=0.522]{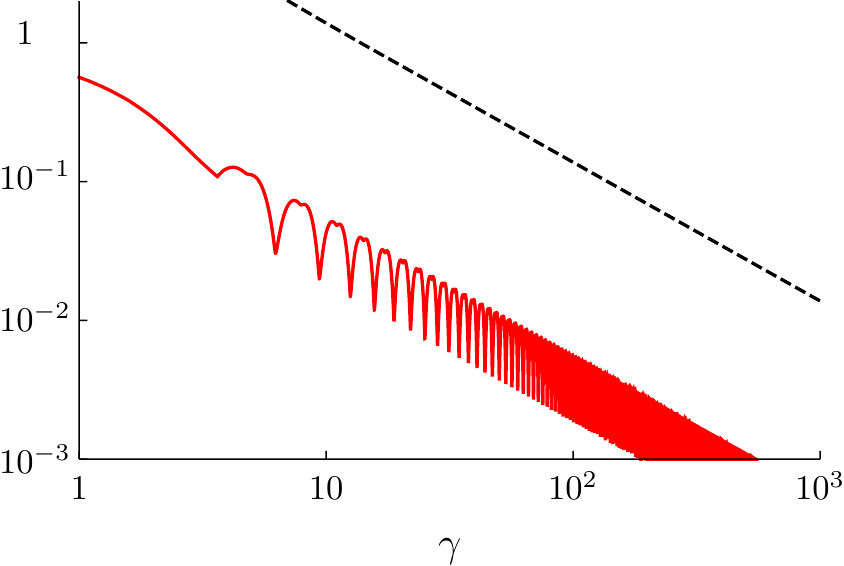}
&
\includegraphics[scale=0.522]{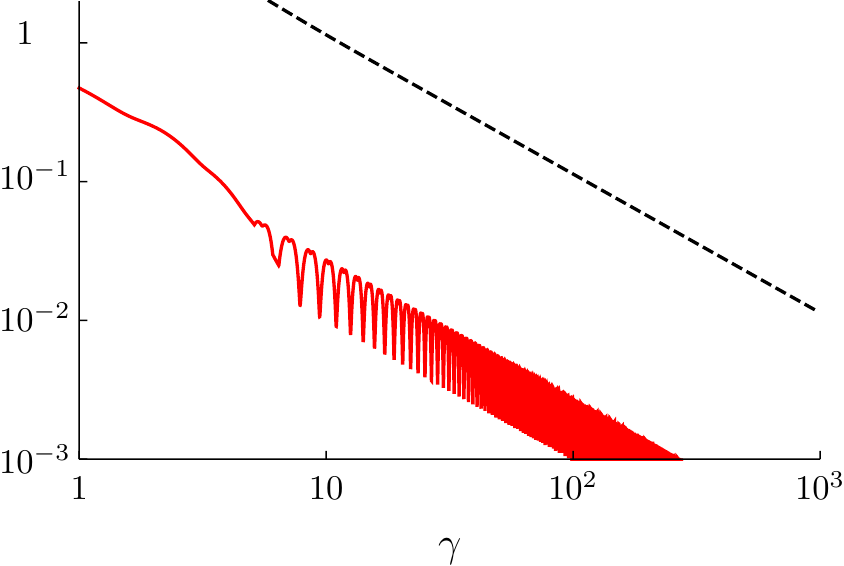}
\\[-12.7truemm]
\tiny$gt=0.1$, $\Gamma t=2.0$\hspace*{18truemm}
&
&
\\[7.9truemm]
\end{tabular}
\caption{The convergence to the Zeno dynamics via the strong damping and the persistent oscillations in the model described in Sec.~\ref{sec:Example}. 
The parameters other than $g t$ and $\Gamma t$ are set at $\Omega_0t=0.0$, $\Omega_1t=1.0$, $\Omega_2t=2.0$, $\omega_2t=1.0$, and $\kappa t=1.0$. The dashed line is the bound in Eq.~(\ref{eqn:BoundCPTP}) with $M=\sqrt{2}$, $p(s)=\sqrt{2}$, and $\eta=\kappa/2$.}
\label{fig:ZenoLimitZontCont}
\end{figure}

\section{Conclusions}
We have generalized Kato's adiabatic theorem to a particular type of nonunitary dynamics. This allowed us to treat two very different physical strong-coupling limits, strong oscillations and strong damping, on equal mathematical footing, and to generalize previous results to the situation where both effects are present simultaneously. Both limits provide various Zeno-type generators, including Zeno Hamiltonians and Zeno Lindbladians.

The adiabatic theorem allows us to provide explicit bounds on the convergence which depend on various parameters, in particular the oscillation gap and the dissipative gap of the strong generator. This demonstrates nicely both effects.

A fantastic experiment to which our generalized result is related is provided by Ref.~\cite{ref:jump}. There, we have a three-level system $\{|D\rangle,|G\rangle,|B\rangle\}$ with strong driving $\Omega_{BG} \gg \Omega_{DG}$ between $|B\rangle$ and $|G\rangle$ \emph{and} strong damping $\Gamma\gg \Omega_{DG}$  on level $|B\rangle$. What is however interesting there is that $\Gamma \gg \Omega_{BG}$. Such multi-scale strong coupling could be therefore an interesting extension of our result in future studies.

\section*{Acknowledgments}
We thank V. V. Albert and Z. K. Minev for discussions. 
DB acknowledges support by Waseda University and partial support by the EPSRC Grant No.~EP/M01634X/1\@.
This work was supported by the Top Global University Project from the Ministry of Education, Culture, Sports, Science and Technology (MEXT), Japan.
KY was supported by the Grants-in-Aid for Scientific Research (C) (No.~18K03470) and for Fostering Joint International Research (B) (No.~18KK0073) both from the Japan Society for the Promotion of Science (JSPS), and by the Waseda University Grant for Special Research Projects (No.~2018K-262).
PF and SP are  supported by INFN through the project `QUANTUM'\@. PF is supported by the Italian National Group of Mathematical Physics (GNFM-INdAM).

\appendix
\section*{Appendix}
\section{Spectral Properties of Quantum Semigroups}
\label{qsemi}
We recall here a few properties of quantum semigroups that are useful for our analysis.
Every linear operator $A$ on a finite-dimensional space can be expressed (essentially) uniquely in terms of its Jordan normal form (\textit{canonical form} or \textit{spectral representation})~\cite{ref:Katobook}:
\begin{equation}
\label{eq:Jordanform}
A = \sum_k (\lambda_k P_k + N_k),
\end{equation}
where $\{\lambda_k\}$, the \textit{spectrum} of $A$, is the set of distinct eigenvalues of $A$ ($\lambda_k\neq \lambda_\ell$ for $k\neq \ell$), $\{P_k\}$, the \textit{spectral projections} of $A$, are the corresponding eigenprojections, satisfying
\begin{equation}
\label{eq:PVM}
P_k P_\ell = \delta_{k\ell} P_k, \qquad \sum_j P_j = I,
\end{equation}
for all $k$ and $\ell$, and $\{N_k\}$ are the corresponding \textit{nilpotents} of $A$, satisfying for all $k$ and $\ell$
\begin{equation}
\label{eq:nilpot}
P_k N_\ell = N_{\ell} P_k = \delta_{k\ell} N_k, \qquad N_k^{n_k} = 0,
\end{equation}
for some integer $1\leq n_k\leq \rank P_k$.

Notice that the spectral projections, which determine a partition of the space through the \textit{resolution of identity}~\eqref{eq:PVM}, are not Hermitian in general, $P_k \neq P_k^\dagger$. An eigenvalue $\lambda_k$ of $A$ is called \textit{semisimple} or \textit{diagonalizable} if the corresponding nilpotent $N_k$ is zero (equivalently, $n_k=1$). The operator $A$ is diagonalizable if and only if all its eigenvalues are semisimple.

In the following, we state without proofs some properties 
of GKLS generators $\mathcal{L}$, whose exponential $\mathcal{E}_t=\rme^{t\mathcal{L}}$ is a CPTP semigroup for  $t\geq 0$, i.e.~$\mathcal{E}_t\circ \mathcal{E}_s = \mathcal{E}_{t+s}$ for $t,s\geq 0$.
For further details and proofs, see e.g.~Ref.~\cite{ref:Mixing-Wolf}, in particular Propositions~6.1--6.3 and Theorem~6.1 therein.
See also Refs.~\cite{ref:BaumgartnerI,ref:BaumgartnerII,ref:DiscreteBaumgartner,ref:VictorPRX,ref:VictorPhD} for the recent studies on the structure of GKLS generators, in particular on their steady states.

\begin{prop}[Spectral properties of GKLS generators]
\label{eqn:GKLSgenerator}
Let $\mathcal{L}$ be a GKLS generator on a finite-dimensional space. Then, the following properties hold:
\begin{enumerate}[label=(\roman*)]

\item The spectrum $\{\lambda_k\}$ of $\mathcal{L}$ is contained in the closed left half-plane
$\mathbb{C}_-=\{\lambda \in \mathbb{C},  \Re \lambda \leq 0\}$, and $\lambda=0$ is an eigenvalue. Moreover, all the ``peripheral'' eigenvalues, belonging to the boundary of $\mathbb{C}_-$, i.e.~on the imaginary axis $\partial\mathbb{C}_-= \rmi \mathbb{R} = \{\lambda \in \mathbb{C}, \Re \lambda = 0\}$, are semisimple.

\item The canonical form of $\mathcal{L}$  
reads
\begin{equation}
	\mathcal{L}
	=\mathcal{L}_\varphi +  \sum_{\Re \lambda_k <0} (\lambda_k \mathcal{P}_k+\mathcal{N}_k),
	\label{eqn:SpectralDecompLindblad}
\end{equation}
where 
\begin{equation}
	\mathcal{L}_\varphi
	= \sum_{\Re \lambda_k=0 }\lambda_k\mathcal{P}_k
\end{equation} 
is the ``peripheral'' part of $\mathcal{L}$, and $\{\mathcal{P}_k\}$ and $\{\mathcal{N}_k\}$ are the spectral projections and the nilpotents of $\mathcal{L}$, respectively.

\item
The projection onto the peripheral spectrum of $\mathcal{L}$,
\begin{equation}
	\mathcal{P}_\varphi=\sum_{\Re \lambda_k=0 }\mathcal{P}_k,
\end{equation}
is CPTP, and $\mathcal{L}_\varphi=\mathcal{L}\mathcal{P}_\varphi=\mathcal{P}_\varphi \mathcal{L}$.
The projection $\mathcal{P}_\varphi$ is also the  projection onto the peripheral spectrum (consisting of the eigenvalues of unit magnitude) of the CPTP map $\rme^{ t\mathcal{L}}$ for any $t>0$.
The part of $\mathcal{L}$ corresponding to the eigenvalues with negative real parts describes decay.

\item 
The peripheral map $\rme^{t\mathcal{L_\varphi}} \mathcal{P}_\varphi$ is CPTP for all $t\in \mathbb{R}$. 
\end{enumerate}
\end{prop}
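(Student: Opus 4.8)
The plan is to derive everything from the single structural fact that $\mathcal{E}_t=\rme^{t\mathcal{L}}$ is a CPTP semigroup, together with two standard consequences: every CPTP map is a contraction for the trace norm (so its eigenvalues lie in the closed unit disk and it is power-bounded, $\|\mathcal{E}_t^n\|=\|\mathcal{E}_{nt}\|\le1$), and the set of CPTP maps on a fixed finite-dimensional space is convex and closed. First I would prove (i). Writing the eigenvalues of $\mathcal{E}_t$ as $\rme^{t\lambda_k}$, the disk bound gives $\rme^{t\,\Re\lambda_k}\le1$ for all $t\ge0$, forcing $\Re\lambda_k\le0$ and placing $\mathop{\mathrm{spec}}(\mathcal{L})$ in $\mathbb{C}_-$. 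Trace preservation at the infinitesimal level means the dual generator annihilates the identity, $\mathcal{L}^\ast(I)=0$, so $0\in\mathop{\mathrm{spec}}(\mathcal{L}^\ast)=\overline{\mathop{\mathrm{spec}}(\mathcal{L})}$ and hence $0\in\mathop{\mathrm{spec}}(\mathcal{L})$. Finally, power-boundedness rules out nontrivial Jordan blocks attached to unit-modulus eigenvalues of $\mathcal{E}_t$ (such a block makes $\|\mathcal{E}_t^n\|$ grow polynomially in $n$), so the peripheral eigenvalues $\lambda_k\in\rmi\mathbb{R}$ are semisimple.

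Part (ii) is then immediate: collecting in $\mathcal{L}_\varphi$ exactly the peripheral spectral data, the canonical form (\ref{eq:Jordanform}) splits as claimed, and by (i) no nilpotent $\mathcal{N}_k$ accompanies a peripheral $\lambda_k$. For (iii), the algebraic identities $\mathcal{L}_\varphi=\mathcal{L}\mathcal{P}_\varphi=\mathcal{P}_\varphi\mathcal{L}$ follow from spectral calculus, since $\mathcal{P}_\varphi$ is a sum of spectral projections and hence commutes with $\mathcal{L}$, while $\mathcal{L}\mathcal{P}_k=\lambda_k\mathcal{P}_k$ on the peripheral indices where $\mathcal{N}_k=0$. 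The delicate point is that $\mathcal{P}_\varphi$ is itself CPTP. Here I would use a recurrence argument: the peripheral eigenvalues being $\{\rmi\omega_k\}$ for finitely many reals $\omega_k$, simultaneous Diophantine approximation (Kronecker--Weyl) yields integers $n_j\to\infty$ with $\rme^{\rmi n_j\omega_k}\to1$ for every $k$ at once; along this sequence $\rme^{n_j\mathcal{L}}$ converges to the identity on the peripheral subspace and to zero on the decaying subspace, i.e.\ $\rme^{n_j\mathcal{L}}\to\mathcal{P}_\varphi$. Each $\rme^{n_j\mathcal{L}}$ is CPTP and the CPTP set is closed, so $\mathcal{P}_\varphi$ is CPTP. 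That $\mathcal{P}_\varphi$ also projects onto the unit-modulus spectrum of $\rme^{t\mathcal{L}}$ for $t>0$ is clear because exponentiation leaves the generalized eigenspaces unchanged and sends $\Re\lambda_k=0$ precisely to modulus-one eigenvalues.

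For (iv) I would first record the identity $\rme^{t\mathcal{L}_\varphi}\mathcal{P}_\varphi=\rme^{t\mathcal{L}}\mathcal{P}_\varphi$, valid for all $t\in\mathbb{R}$, which follows from $\mathcal{L}_\varphi^{\,n}\mathcal{P}_\varphi=\mathcal{L}^n\mathcal{P}_\varphi$ (a consequence of $\mathcal{L}_\varphi=\mathcal{L}\mathcal{P}_\varphi=\mathcal{P}_\varphi\mathcal{L}$ and $\mathcal{P}_\varphi^2=\mathcal{P}_\varphi$). For $t\ge0$ the right-hand side is a product of the two CPTP maps $\rme^{t\mathcal{L}}$ and $\mathcal{P}_\varphi$, hence CPTP. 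The case $t<0$ is again handled by recurrence: since the peripheral flow is generated by the purely imaginary $\{\rmi\omega_k\}$, the closure of $\{(\rme^{\rmi s\omega_k})_k:s>0\}$ is a compact, inversion-symmetric subtorus, so one can pick $s_j\to+\infty$ with $\rme^{\rmi s_j\omega_k}\to\rme^{\rmi t\omega_k}$ for all $k$; then $\rme^{s_j\mathcal{L}}\mathcal{P}_\varphi=\rme^{s_j\mathcal{L}_\varphi}\mathcal{P}_\varphi\to\rme^{t\mathcal{L}_\varphi}\mathcal{P}_\varphi$, and closedness of the CPTP set upgrades each CPTP approximant to a CPTP limit.

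The main obstacle, and the only place analysis rather than spectral algebra enters, is the CPTP-ness of $\mathcal{P}_\varphi$ and of the backward peripheral map in (iv). Both instances reduce to the same device: realize the target as a limit of genuine forward-time CPTP maps $\rme^{s\mathcal{L}}$ with $s>0$, exploiting that the peripheral phases recur arbitrarily close to any prescribed configuration, and then invoke closedness of the convex cone of CPTP maps. The care required is to ensure the recurrence is \emph{simultaneous} across all frequencies $\omega_k$ and that it drives the strictly decaying part to zero in the limit; both are guaranteed in finite dimensions by the semisimplicity of the peripheral spectrum established in (i) and the strict negativity of the remaining real parts.
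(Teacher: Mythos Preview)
Your argument is correct and complete. The paper, however, does not actually prove this proposition: in Appendix~\ref{qsemi} it explicitly ``state[s] without proofs'' these spectral properties and refers the reader to Wolf's lecture notes (Propositions~6.1--6.3 and Theorem~6.1 therein) for the details. The route you outline---trace-norm contractivity of CPTP maps for the half-plane bound and the existence of the zero eigenvalue via $\mathcal{L}^\ast(I)=0$, power-boundedness of the semigroup to exclude peripheral Jordan blocks, and a Kronecker--Weyl/recurrence argument along the forward flow $\{\rme^{s\mathcal{L}}:s>0\}$ to exhibit both $\mathcal{P}_\varphi$ and the backward peripheral evolution as limits of CPTP maps, hence CPTP by closedness---is exactly the standard one found in that reference. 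So there is no alternative approach in the paper to compare against; your sketch simply fills in what the authors chose to cite.
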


In the particular case of a GKLS generator of a unitary evolution $\mathcal{L} = -\rmi \mathcal{K}= - \rmi [K,\bullet]$, where $K$ is Hermitian, all eigenvalues are imaginary and the operator reduces to its peripheral part, so that $\mathcal{P}_\varphi = 1$ and $\mathcal{L}= \mathcal{L}_\varphi$, and the generator is diagonalizable. 
In such a case the spectrum and the spectral projections of $\mathcal{K}$ can be expressed in terms of the spectrum and of the spectral projections of $K$, as shown in the following lemma.
\begin{lemma}
\label{lemma:Liouvillian}
Let $-\rmi\mathcal{K}=-\rmi[K,\bullet]$ be a GKLS generator corresponding to a Hamiltonian $K$. Let 
\begin{equation}
\mathcal{K}=\sum_k\omega_k\mathcal{P}_k, \qquad K=\sum_n \varepsilon_n P_n
\end{equation} 
be the spectral representations of $\mathcal{K}$ and $K$, respectively, with $\omega_0=0$. Then,
the spectrum of $\mathcal{K}$ is the set of the transition frequencies
$\{\omega_k\}$ determined by the energies $\{\varepsilon_n\}$, that is, for every $k$ there exists  
a pair $(m,n)$ such that
\begin{equation}
\omega_k = \varepsilon_m - \varepsilon_n,
\end{equation}
and the corresponding spectral projections are given by
\begin{equation}
\mathcal{P}_k = \sum_{m,n} \delta_{\omega_k, \varepsilon_m-\varepsilon_n} P_m \bullet P_n, \qquad
\mathcal{P}_0=\sum_nP_n \bullet P_n .
\label{eqn:ProjectorForUnitary}
\end{equation}
\end{lemma}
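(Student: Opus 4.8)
The plan is to exploit the fact that left and right multiplication by $K$ commute, so that $\mathcal{K}=[K,\bullet]$ is diagonalized by the ``sandwich'' superoperators built from the spectral projections of $K$. Concretely, I would introduce the family of superoperators $\mathcal{Q}_{mn}=P_m\bullet P_n$ and first observe, using the resolution of the identity $\sum_n P_n=I$, that any operator decomposes as $X=\sum_{m,n}P_mXP_n$, so that $\sum_{m,n}\mathcal{Q}_{mn}=\mathrm{Id}$, the identity superoperator. A one-line computation from $K=\sum_n\varepsilon_nP_n$ together with $P_mP_{m'}=\delta_{mm'}P_m$ then gives
\begin{equation}
\mathcal{K}(P_mXP_n)=KP_mXP_n-P_mXP_nK=(\varepsilon_m-\varepsilon_n)\,P_mXP_n,
\nonumber
\end{equation}
showing that each $\mathcal{Q}_{mn}$ maps into an eigenspace of $\mathcal{K}$ with eigenvalue $\varepsilon_m-\varepsilon_n$.

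Next I would verify that $\{\mathcal{Q}_{mn}\}$ is a complete orthogonal family of projections: from $P_mP_{m'}=\delta_{mm'}P_m$ one obtains $\mathcal{Q}_{mn}\mathcal{Q}_{m'n'}=\delta_{mm'}\delta_{nn'}\mathcal{Q}_{mn}$ and, as already noted, $\sum_{m,n}\mathcal{Q}_{mn}=\mathrm{Id}$. Since the distinct eigenvalues of $\mathcal{K}$ are then precisely the distinct values among the transition frequencies $\varepsilon_m-\varepsilon_n$, I would let $\{\omega_k\}$ run over these distinct values and group the corresponding projectors, setting $\mathcal{P}_k=\sum_{m,n}\delta_{\omega_k,\varepsilon_m-\varepsilon_n}\mathcal{Q}_{mn}$. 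Orthogonality and completeness of the $\{\mathcal{P}_k\}$ follow immediately from those of the $\{\mathcal{Q}_{mn}\}$, and by construction $\mathcal{K}=\sum_k\omega_k\mathcal{P}_k$ with no leftover nilpotent contribution.

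Finally, I would invoke the uniqueness of the canonical (Jordan) form recalled in Eqs.~\eqref{eq:Jordanform}--\eqref{eq:nilpot}: having produced a family $\{\mathcal{P}_k\}$ satisfying the resolution-of-identity relations \eqref{eq:PVM}, together with $\mathcal{K}=\sum_k\omega_k\mathcal{P}_k$ and vanishing nilpotents, these must \emph{be} the spectral projections of $\mathcal{K}$; this simultaneously confirms that $\mathcal{K}$ is diagonalizable. The stated form of $\mathcal{P}_0$ then follows by specializing to $\omega_0=0$: because the $\varepsilon_n$ are distinct eigenvalues of $K$, the condition $\varepsilon_m-\varepsilon_n=0$ forces $m=n$, whence $\mathcal{P}_0=\sum_nP_n\bullet P_n$.

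I do not anticipate a genuine obstacle here; the argument is essentially bookkeeping. The only point that warrants care is making the grouping by distinct frequencies consistent and checking that the exhibited resolution leaves no nilpotent part, so that the uniqueness of the canonical form can be applied cleanly to identify the $\{\mathcal{P}_k\}$ as the spectral projections.
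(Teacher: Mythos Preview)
Your proposal is correct and follows essentially the same approach as the paper: both arguments decompose $\mathcal{K}$ via the sandwich superoperators $P_m\bullet P_n$, observe these yield eigenvalues $\varepsilon_m-\varepsilon_n$, and group by distinct frequencies to obtain the spectral projections. The paper's version is simply a more compressed direct computation, whereas you spell out the projection and completeness relations and invoke uniqueness of the canonical form explicitly.
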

\begin{proof}
The second relation in Eq.~\eqref{eqn:ProjectorForUnitary} is a direct consequence of the first one and of the convention $\omega_0=0$. The spectrum of $\mathcal{K}$ and the first identity in Eq.~\eqref{eqn:ProjectorForUnitary}
are obtained by the following computation:
\begin{align}
\mathcal{K}  = [K,\bullet] 
&= \Bigl[\sum_n \varepsilon_n P_n, \bullet\Bigr]
\nonumber\\
&= \sum_{m,n} (\varepsilon_m - \varepsilon_n) P_m \bullet P_n 
\nonumber\\
& = \sum_{m,n} (\varepsilon_m - \varepsilon_n) \sum_k \delta_{\omega_k,\varepsilon_m-\varepsilon_n} P_m \bullet P_n
\nonumber\\
&=  \sum_k \omega_k \sum_{m,n}  \delta_{\omega_k,\varepsilon_m-\varepsilon_n} P_m \bullet P_n 
\nonumber\\
&= \sum_k \omega_k \mathcal{P}_k,
\end{align}
which completes the proof. This is actually a simpler proof of a result presented in Ref.~\cite{ref:ControlDecoZeno}.
\end{proof}

\section{Proof of the Generalized Adiabatic Theorem}
\label{app:Proof}
Here we prove the generalized adiabatic theorem (Theorem~\ref{thm:Adiabatic}) stated in Sec.~\ref{sec:MainResults}.
The proof is an extension to semigroups of the adiabatic theorem by Kato~\cite{ref:KatoAdiabatic}, who only considered unitary evolutions. The noteworthy difference and the novelty of this proof are the bound on the semigroup $\rme^{t(\gamma B + C)}$.
We will see that the diagonalizability (semisimpleness) of the purely imaginary spectrum plays a crucial role.

\begin{proof}[Proof of Theorem~\ref{thm:Adiabatic}]
Let 
\begin{equation}
B = \sum_{k} (b_{k}P_{k} + N_{k} )
\end{equation}
be the spectral representation of $B$, where $\{b_k\}$ is its spectrum, $\{P_k\}$ its spectral projections, and $\{N_k\}$ its nilpotents. By assumption, $\Re b_k\leq 0$ and the purely imaginary eigenvalues are semisimple (the corresponding nilpotents are zero). 

We consider separately the contributions of the peripheral eigenvalues and of the nonperipheral ones.

\paragraph{Step 1. Purely imaginary eigenvalues:}
We first focus on a specific purely imaginary eigenvalue $b_{\ell}\in \rmi \mathbb{R}$ of $B$.
It is semisimple, so that
\begin{equation}
(B -b_{\ell}I) P_{\ell}=0. 
\label{eq:eigeq}
\end{equation}
Let us define the reduced resolvent at $b_\ell$ by
\begin{equation}
S_{\ell}
=\sum_{k\neq\ell}[(b_k -b_{\ell})I +N_k]^{-1}P_{k}.
\end{equation}
Then, we have

\begin{equation}
(B-b_{\ell}I)S_{\ell}
=I-P_{\ell}. 
\label{eq:resolv}
\end{equation}
The relation~\eqref{eq:resolv} 
will play an important role later.

We are going to prove the adiabatic limit
\begin{equation}
\rme^{t(\gamma B + C)} P_\ell=\rme^{t \gamma B} \rme^{t C_Z} P_\ell+\mathcal{O}(1/\gamma)
\quad\text{as}\quad
\gamma\to +\infty,
\label{eq:K29}
\end{equation}
uniformly for $t\in[0,t_2]$ with $t_2>0$.
We start by noting
\begin{align}
\rme^{t(\gamma B + C)} -\rme^{t (\gamma B+C_Z)}  &=
-\int_0^t \d s\, \frac{\partial}{\partial s}(\rme^{(t-s) (\gamma B + C)} \rme^{s (\gamma B +C_Z)})
\nonumber\\
& =   \int_0^t \d s\, \rme^{(t-s) (\gamma B + C)} (C - C_Z) \rme^{s (\gamma B +C_Z )},
\end{align}
which multiplied by $P_\ell$ reads
\begin{equation}
\rme^{t(\gamma B + C)}P_\ell -\rme^{t (\gamma B+C_Z)} P_\ell  =  
 \int_0^t \d s\, \rme^{(t-s) (\gamma B + C)} (I-P_\ell)C P_\ell \rme^{s(\gamma b_\ell I+C_Z)}.
\label{eq:A3}
\end{equation}
We wish to prove that the integral in Eq.~\eqref{eq:A3} is of order $\mathcal{O}(1/\gamma)$---even if the integrand is $\mathcal{O}(1)$---due to its fast oscillatory behavior, e.g.~like
\begin{equation}
\int_0^t \d s\,\rme^{\rmi s \gamma}  = 
-\frac{\rmi}{\gamma}\int_0^t  \d s\,\frac{\d}{\d s} (\rme^{\rmi s \gamma})  = -\frac{\rmi}{\gamma} (\rme^{\rmi t \gamma} - 1).
\end{equation}

By using Eq.~\eqref{eq:resolv}, one can rewrite the integrand in Eq.~(\ref{eq:A3}) as
\begin{equation}
\rme^{(t-s) (\gamma B + C)} (I-P_\ell)C P_\ell \rme^{s(\gamma b_\ell I+C_Z)} 
=
\rme^{(t-s) (\gamma B + C)} (B-b_\ell I) S_\ell C P_\ell
\rme^{s(\gamma b_\ell I+C_Z)}.
\label{eq:A9}
\end{equation}
Since
\begin{equation}
 \frac{\partial}{\partial s}
(
\rme^{(t-s) (\gamma B + C)}
\rme^{s(\gamma b_\ell I+C)}
)
=- \gamma \rme^{(t-s) (\gamma B + C)} (B -  b_\ell I)\rme^{s(\gamma b_\ell I+ C)} ,
\end{equation}
one has
\begin{equation}
\rme^{(t-s) (\gamma B + C)} (B-b_\ell I) = -\frac{1}{\gamma} \left(
\frac{\partial}{\partial s}
(\rme^{(t-s) (\gamma B + C)} \rme^{s(\gamma b_\ell I+ C)})\right)
\rme^{-s(\gamma b_\ell I+ C)},
\label{eq:A10}
\end{equation}
and hence, the integrand~\eqref{eq:A9} is further rewritten as
\begin{equation}
\rme^{(t-s) (\gamma B + C)} (I-P_\ell)C P_\ell \rme^{s (\gamma b_\ell I+ C_Z )} =
-\frac{1}{\gamma}
\left(
\frac{\partial}{\partial s}
(\rme^{(t-s) (\gamma B + C)} \rme^{s(\gamma b_\ell I+ C)})
\right)
\rme^{-s C} S_\ell C P_\ell \rme^{s C_Z }.
\end{equation}
Therefore, the integral in Eq.~\eqref{eq:A3} reads
\begin{align}	
&\rme^{t(\gamma B + C)}P_\ell -\rme^{t (\gamma B+C_Z)} P_\ell
\nonumber\\
&\qquad=-\frac{1}{\gamma}\int_0^t \d s\left(
\frac{\partial}{\partial s}
(\rme^{(t-s) (\gamma B + C)} \rme^{s(\gamma b_\ell I+ C)})
\right)
\rme^{-s C} S_\ell C P_\ell \rme^{s C_Z} 
\nonumber\\
& \qquad = -\frac{1}{\gamma} 
\,\biggl[\rme^{(t-s) (\gamma B + C)} S_\ell C P_\ell \rme^{s(\gamma b_\ell I+ C_Z)}  \biggr]_{s=0}^{s=t}
\nonumber\\
&\qquad\qquad\qquad\qquad\qquad
{}+\frac{1}{\gamma} \int_0^t \d s\, 
\rme^{(t-s) (\gamma B + C)} \rme^{s(\gamma b_\ell I+ C)}  \frac{\d}{\d s} (\rme^{-s C} S_\ell C P_\ell \rme^{s C_Z})
\nonumber\\
& \qquad = \frac{1}{\gamma} 
\,\biggl(
\rme^{t (\gamma B + C)} S_\ell C P_\ell 
- S_\ell C P_\ell \rme^{t(\gamma b_\ell I + C_Z)}  
\nonumber\\
&\qquad\qquad\qquad\qquad\qquad
{}- \int_0^t \d s\, 
\rme^{(t-s) (\gamma B + C)} [C, S_\ell C P_\ell] P_\ell \rme^{s (\gamma b_\ell I+ C_Z)}
\biggr),
\label{eqn:A.13}
\end{align}
where in the second equality we used integration by parts. 

It remains to show that the quantity in the parentheses multiplying $1/\gamma$ in Eq.~(\ref{eqn:A.13}) is uniformly bounded for $t\in[0,t_2]$.
In Kato's proof, this followed simply from unitarity. In our case,
it follows from $b_{\ell}$ being purely imaginary and from $\rme^{t B}$ being a bounded semigroup,
\begin{equation}
\label{eq:boundedetB}
\|\rme^{t   B}\| \le M
\end{equation}
for $t\ge0$ and for some $M\ge1$. This implies that
\begin{equation}
\label{eq:boundedpertsemi}
\|\rme^{t (\gamma B+C)} \| \leq M\rme^{t M\|C\|}
\end{equation}
for $t\ge0$ and $\gamma>0$. Indeed,
\begin{align}
\|\rme^{t (\gamma B+C)} \| 
\le{}&\|\rme^{t \gamma B}\|
+\sum_{n=1}^{+\infty}\int_0^t \d s_1\int_0^{s_1}\d s_2\cdots\int_0^{s_{n-1}}\d s_n\|
\rme^{(t-s_1) \gamma B}C\rme^{ (s_1-s_2) \gamma B}
\cdots
\nonumber\\[-2truemm]
&\qquad\qquad\qquad\qquad\qquad\qquad\qquad\qquad\qquad\qquad\quad
{}\times
\rme^{ (s_{n-1}-s_n) \gamma B}C\rme^{ s_n \gamma B}\|
\nonumber\displaybreak[0]\\
\le{}&
\sum_{n=0}^{+\infty}\frac{t^n}{n!}M^{n+1}\|C\|^n
=M\rme^{t M\|C\|}.
\end{align}
We also have a simpler bound $\|\rme^{tC_Z}\|\le\rme^{t\|C_Z\|}$, and therefore, for $\gamma>0$, Eq.~(\ref{eqn:A.13}) is bounded by
\begin{align}
&\| \rme^{t(\gamma B + C)}P_\ell -\rme^{t (\gamma B+C_Z)} P_\ell \| 
\nonumber\\
&\qquad
\le
\frac{1}{\gamma}
\left(
\|S_\ell C P_\ell\|
\,\Bigl(
M\rme^{tM\|C\|}
+\rme^{t\|C_Z\|}
\Bigr)
 + M\|[C, S_\ell C P_\ell] P_\ell \| 
\frac{\rme^{tM\|C\|}
-\rme^{t\|C_Z\|}
}{M\|C\|-\|C_Z\|}
\right)
\nonumber\\
&\qquad
\le
\frac{1}{\gamma}(M+1)
\|S_\ell C P_\ell\|
\frac{
M
\|C\|
\rme^{tM\|C\|}
-
\|C_Z\| 
\rme^{t\|C_Z\|}
}{M\|C\|-\|C_Z\|}
\nonumber\\
&\qquad
\le
\frac{1}{\gamma}(M+1)
\|S_\ell C P_\ell\|
\frac{
M
\|C\|
\rme^{t_2 M\|C\|}
-
\|C_Z\| 
\rme^{t_2\|C_Z\|}
}{M\|C\|-\|C_Z\|},
\end{align}
for all $t\in[0,t_2]$,
thus proving the uniformity of the limit~\eqref{eq:K29} [for the last two inequalities, we have bounded a factor as $\|[C, S_\ell C P_\ell]P_\ell\|=\|CS_\ell C P_\ell-S_\ell C P_\ell C_Z\|\le\|S_\ell C P_\ell\|(\|C\|+\|C_Z\|)$, 
and used the fact that $t\mapsto \frac{a \rme^{t a}-b\rme^{t b}}{a-b}$ is increasing for positive $t$, for all $a,b>0$].
By summing Eq.~\eqref{eq:K29} over the peripheral spectrum $b_\ell \in \rmi \mathbb{R}$, we get
\begin{equation}
\rme^{t(\gamma B + C)} P_\varphi=\rme^{t \gamma B} \rme^{t C_Z} P_\varphi+\mathcal{O}(1/\gamma)
\quad\text{as}\quad
\gamma\to+\infty,
\label{eq:scl6}
\end{equation}
uniformly for $t\in[0,t_2]$ with $t_2>0$.

\paragraph{Step 2. Eigenvalues with negative real parts:}
Consider now an eigenvalue $b_{\ell}$ with a negative real part, $\Re b_\ell<0$, so that
\begin{equation}
B P_{\ell} = b_{\ell} P_{\ell} + N_{\ell}. \label{eq:eigeqneg}
\end{equation}
We get, for $\gamma>0$ and $t\geq 0$,
\begin{equation}
\rme^{t(\gamma B + C)} = \rme^{t \gamma B} + \int_0^t\d s\, \rme^{(t-s) (\gamma B + C)} C \rme^{s\gamma  B},
\end{equation}
and
\begin{equation}
\label{eqn:NonperipheralDysonBound}
\|\rme^{t(\gamma B + C)} P_\ell - \rme^{t \gamma B} P_\ell\| \le \int_0^t\d s\, \| \rme^{(t-s)(\gamma B + C)}\| \|C\| \|\rme^{s \gamma B} P_\ell \|.
\end{equation}
Now, recalling the bound~(\ref{eq:boundedpertsemi}), we have
$
\|\rme^{(t-s)(\gamma B + C)}\| \le M \rme^{tM \|C\|}
$
for $s\in[0,t]$, and 
\begin{equation}
\rme^{s \gamma B} P_\ell    = \rme^{s\gamma b_\ell} \rme^{s \gamma N_\ell }  P_\ell = \rme^{s\gamma b_\ell} \sum_{k=0}^{n_\ell-1} \frac{1}{k !} (s \gamma N_\ell)^k P_\ell
\end{equation}
with some $n_\ell$, whence
\begin{equation}
\label{eq:exppoly}
\|\rme^{s \gamma B} P_\ell \|  \leq  \rme^{s\gamma \Re b_\ell} p_\ell(s \gamma),
\end{equation}
with $p_\ell$ a polynomial of degree $n_\ell -1$ with positive coefficients depending on $N_\ell$.
Therefore, Eq.~(\ref{eqn:NonperipheralDysonBound}) is bounded by
\begin{align}
\|\rme^{t(\gamma B + C)} P_\ell - \rme^{t \gamma  B} P_\ell\| 
&\le M \|C\| \rme^{tM \|C\|} \int_0^t \d s\,\rme^{s\gamma \Re b_\ell} p_\ell(s \gamma)
\nonumber\\
&\le  \frac{1}{\gamma} M \|C\| \rme^{tM \|C\|} \int_0^{+\infty}\d s\, \rme^{s\Re b_\ell} p_\ell(s).
\end{align}
Thus, we have
\begin{equation}
\rme^{t(\gamma B + C)}P_\ell = \rme^{t \gamma B } P_\ell + \mathcal{O}(1/\gamma)
= \rme^{t \gamma B } \rme^{t  C_Z } P_\ell + \mathcal{O}(1/\gamma)
\end{equation}
as $\gamma \to +\infty$. By summing over the nonperipheral spectrum with $\Re b_\ell <0$, we get
\begin{equation}
\rme^{t(\gamma B + C)}(I-P_\varphi) = \rme^{t \gamma B } \rme^{t  C_Z } (I-P_\varphi)+ \mathcal{O}(1/\gamma)  \quad\text{as}\quad \gamma\to+\infty,
\label{eq:A22}
\end{equation} 
uniformly for $t\in [0, t_2]$ with $t_2>0$.

Now, for $t\geq t_1>0$, we get from Eq.~\eqref{eq:exppoly} that
\begin{equation}
\|\rme^{t \gamma B } P_\ell\| \leq M_1 \rme^{t_1 \gamma  c} \leq \frac{M_1}{t_1 \gamma  |c|}
\end{equation}
with $M_1 \geq 1 $ and $\Re b_\ell \leq c <0$. Therefore, by summing over the nonperipheral spectrum, we have that
\begin{equation}
\rme^{t(\gamma B + C_Z)} (I - P_\varphi) =  \mathcal{O}(1/\gamma) \quad\text{as}\quad
\gamma\to+\infty,
\label{eq:negad}
\end{equation}
uniformly for $t\in [t_1,t_2]$ with  $0<t_1<t_2$.

\paragraph{Step 3. Putting the two cases together:}
Summing Eqs.~(\ref{eq:scl6}) and~(\ref{eq:A22}), we arrive at 
\begin{equation}
\rme^{t(\gamma B+C)}=\rme^{t (\gamma B+C_Z)}+\mathcal{O}(1/\gamma) \quad\text{as}\quad
\gamma\to+\infty,
\label{eq:scl7.0}
\end{equation}
uniformly in $t$ on compact intervals of $[0,+\infty)$,
with $C_Z$ defined in Eq.~(\ref{eq:HZ}).
Since $C_Z$ commutes with $B$, Eq.~(\ref{eq:scl7.0})
implies the strong-coupling limit~(\ref{eq:scl.0}).

Moreover, by using Eq.~(\ref{eq:negad}), which describes how the nonperipheral part decays away at positive times, one finally gets 
\begin{equation}
\rme^{t(\gamma B+C)}=\rme^{t (\gamma B+C_Z)}P_\varphi+\mathcal{O}(1/\gamma) \quad\text{as}\quad
\gamma\to+\infty,
\label{eq:scl7}
\end{equation}
uniformly in $t$ on compact intervals of $(0,+\infty)$. This gives the limit~(\ref{eq:scl}) and concludes the proof.
\end{proof}

\section{Further Estimation of the Error Bound}
\label{app:Bounds}
Here we describe how we get the error bound (\ref{eqn:BoundAdiabaticSimpler}) from the tighter version (\ref{eqn:BoundAdiabatic}).
The idea is to make use of the Jordan normal form to facilitate the estimation of the norms.

We first point out that any matrix can be transformed into the Jordan normal form but with the ``$1$''s just next to the eigenvalues on the diagonal in the Jordan normal form scaled to some constant $\nu$.\footnote{In physics contexts, this rescaling would be somewhat necessary.
For instance, if $t$ in $\rme^{tB}$ is time, $B$ is of the dimension of frequency, and ``$1$'' in the Jordan normal form of $B$ does not make sense without specifying a unit.
The quantity $\nu$ fixes the unit in the Jordan normal form.}
Let us turn $B$ into such a Jordan form by a similarity transformation $T_\nu$,
\begin{equation}
	T_\nu^{-1}BT_\nu=\sum_k(b_k\tilde{P}_k+\tilde{N}_k),
\end{equation} 
where $\{b_k\}$ is the spectrum of $B$, $\{\tilde{P}_k\}$ the diagonal spectral projections, and $\{\tilde{N}_k\}$ the nilpotents with entries $\nu$ or $0$ on the next diagonal.
Notice that the infinity norms (the largest singular values) of $\tilde{P}_k$ and $\tilde{N}_k^n$ are $\|\tilde{P}_k\|_\infty=1$ and $\|\tilde{N}_k^n\|_\infty=\nu^n$ or $0$, respectively, for any positive integer $n$ (in the following, we use infinity norm and omit the subscript ``$\infty$'' of $\|{}\bullet{}\|_\infty$).
Then, we can estimate norms e.g.~as $\|P_k\|\le\|T_\nu^{-1}\|\|\tilde{P}_k\|\|T_\nu\|=\chi_\nu$, where $\chi_\nu=\|T_\nu^{-1}\|\|T_\nu\|\ge1$ is called ``condition number'' \cite{ref:horn}.

We will see that the spectral gaps play important roles, and will realize that it is convenient to choose $\nu$ as
\begin{equation}
	\nu=\min(\eta,\Delta),
\end{equation}
where
\begin{equation}
\eta=\min_{b_k\not\in\rmi\mathbb{R}}|{\Re b_k}|,\qquad
\Delta=\min_{k\neq\ell}|b_k-b_\ell|.
\end{equation}

Now, let us start estimating the norms involved in the error bound (\ref{eqn:BoundAdiabatic}).

\paragraph{(i) Bounding $\bm{\|\rme^{tB}\|\le M}$:}
Recalling the assumption that the peripheral eigenvalues $b_k\in\rmi\mathbb{R}$ of $B$ are semisimple, the spectral representation of $B$ reads
\begin{equation}
\rme^{tB}
=\sum_{b_k\in\rmi\mathbb{R}}\rme^{tb_k}P_k
+\sum_{b_k\not\in\rmi\mathbb{R}}\rme^{tb_k}\rme^{tN_k}P_k.
\end{equation}
It is bounded as
\begin{align}
\|\rme^{tB}\|
&\le
\chi_\nu\sum_{b_k\in\rmi\mathbb{R}}
+\chi_\nu\sum_{b_k\not\in\rmi\mathbb{R}}\rme^{t\Re b_k}\rme^{t\|\tilde{N}_k\|}
\nonumber\\
&\le
\chi_\nu\sum_{b_k\in\rmi\mathbb{R}}
+\chi_\nu\sum_{b_k\not\in\rmi\mathbb{R}}\rme^{-\eta t}\rme^{\nu t}
\nonumber\\
&\le
\chi_\nu\sum_{b_k\in\rmi\mathbb{R}}
+\chi_\nu\sum_{b_k\not\in\rmi\mathbb{R}}
\nonumber\\
&
\le D\chi_\nu
\vphantom{\sum_{b_k\not\in\rmi\mathbb{R}}}
\nonumber\\
&
\equiv M,
\end{align}
where we have used $\nu\le\eta$ and $D$ is the dimension of $B$.

\paragraph{(ii) Bounding the decaying part $\bm{\|\rme^{tB}(I-P_\varphi)\|\le\rme^{-\eta t}p(t)}$:}
The spectral representation of $B$ yields
\begin{equation}
\rme^{tB}(I-P_\varphi)
=\sum_{b_k\not\in\rmi\mathbb{R}}\rme^{tb_k}\rme^{tN_k}P_k
=\sum_{b_k\not\in\rmi\mathbb{R}}\rme^{tb_k}
\sum_{n=0}^{n_k-1}\frac{1}{n!}t^nN_k^nP_k.
\end{equation}
It is bounded as
\begin{equation}
\|\rme^{tB}(I-P_\varphi)\|
\le\chi_\nu
\rme^{-\eta t}
\sum_{b_k\not\in\rmi\mathbb{R}}
\sum_{n=0}^{n_k-1}\frac{1}{n!}(\nu t)^n
\equiv\rme^{-\eta t}p(t).
\end{equation}
We need its integral:
\begin{align}
\int_0^\infty\d s\,\rme^{-\eta s}p(s)
&\le
\chi_\nu
\frac{1}{\eta}
\sum_{b_k\not\in\rmi\mathbb{R}}
\sum_{n=0}^{n_k-1}
(\nu/\eta)^n
\nonumber\\
&
\le
\chi_\nu
\frac{1}{\eta}
\sum_{b_k\not\in\rmi\mathbb{R}}
n_k
\nonumber\\
&
\le
\chi_\nu
\frac{D}{\eta}
\nonumber\\
&
=
\frac{M}{\eta},
\end{align}
where we have used $\nu\le\eta$ and $n_k\le \rank P_k$.

\paragraph{(iii) Bounding the reduced resolvent $\bm{\|S_\ell\|}$:}
The reduced resolvent $S_\ell$ defined in Eq.~(\ref{eqn:Resolvent}) is cast into
\begin{equation}
S_\ell
=\sum_{k\neq\ell}[(b_k-b_\ell)I+N_k]^{-1}P_k
=\sum_{k\neq\ell}
\sum_{n=0}^{n_k-1}
\frac{(-1)^n}{(b_k-b_\ell)^{n+1}}N_k^n
P_k.
\end{equation}
It is bounded as
\begin{align}
\|S_\ell\|
&\le\chi_\nu
\sum_{k\neq\ell}
\sum_{n=0}^{n_k-1}
\frac{\nu^n}{|b_k-b_\ell|^{n+1}}
\nonumber\\
&\le\chi_\nu
\frac{1}{\Delta}
\sum_{k\neq\ell}
\sum_{n=0}^{n_k-1}
(\nu/\Delta)^n
\nonumber\\
&
\le\chi_\nu
\frac{1}{\Delta}
\sum_{k\neq\ell}
n_k
\nonumber\\
&
\le\chi_\nu
\frac{D}{\Delta}
\nonumber\\
&
=\frac{M}{\Delta},
\end{align}
where we have used $\nu\le\Delta$.

\paragraph{(iv) Bounding the projected generator $\bm{\|C_Z\|}$:}
Let us also try to simplify the factor involving $\|C_Z\|$ in the bound (\ref{eqn:BoundAdiabatic}).
The projected generator $C_Z$ is defined in Eq.~(\ref{eq:HZ}), and is bounded as
\begin{equation}
\|C_Z\|
\le\chi_\nu^2\sum_{b_k\in\rmi\mathbb{R}}\|C\|
\le D\chi_\nu^2\|C\|
=\chi_\nu M\|C\|.
\end{equation}
Now, by noting the inequality $(\sinh x)/x\le\cosh x$, we can bound as
\begin{align}
\frac{a\rme^{at}-b\rme^{bt}}{a-b}	
&=
\left(
\cosh[(a-b)t/2]+(a+b)\frac{\sinh[(a-b)t/2]}{a-b}
\right)
\rme^{(a+b)t/2}
\nonumber\\
&\le
\left(
1+\frac{1}{2}t(a+b)
\right)
\rme^{(a+b)t/2}
\cosh[(a-b)t/2]
\nonumber\\
&=\frac{1}{2}
\left(
1+\frac{1}{2}t(a+b)
\right)
(
\rme^{at}+\rme^{bt}
)
\end{align} 
for $a,b\ge0$ and $t\ge0$.
By using this inequality, we can get
\begin{align}
\frac{M\|C\|\rme^{tM\|C\|}-\|C_Z\|\rme^{t\|C_Z\|}}{M\|C\|-\|C_Z\|}	
&\le\frac{1}{2}
\left[
1+\frac{1}{2}t\,\Bigl(M\|C\| + \|C_Z\|\Bigr)
\right]
(
\rme^{t M\|C\|}+\rme^{t \|C_Z\|}
)
\nonumber\\
&\le\frac{1}{2}
\left[
1+\frac{1}{2}t\,\Bigl(M\|C\| + \chi_\nu M\|C\|\Bigr)
\right]
(
\rme^{t M\|C\|}+\rme^{t \chi_\nu M\|C\|}
)
\nonumber\\
&\le
\Bigl(
1+t\chi_\nu M\|C\|
\Bigr)\,
\rme^{t \chi_\nu M\|C\|}
\nonumber\\
&\le
\rme^{2t \chi_\nu M\|C\|}.
\end{align}

\bigskip
Gathering all these estimates, we can bound Eq.~(\ref{eqn:BoundAdiabatic}) as
\begin{align}
&\|\rme^{t(\gamma B + C)}-\rme^{t\gamma B}\rme^{tC_Z}P_\varphi\|
\nonumber\\
&\qquad
\le
\frac{1}{\gamma}\left(
(M+1)\sum_{b_\ell\in\rmi\mathbb{R}}\chi_\nu\frac{M}{\Delta}\|C\|
\rme^{2t \chi_\nu M\|C\|}
+\frac{M^2}{\eta}\|C\|\rme^{tM\|C\|}
\right)
\nonumber\\
&\qquad\qquad\qquad\qquad\qquad\qquad\qquad\qquad\qquad\quad
{}+\chi_\nu
\rme^{-\gamma\eta t}
\sum_{b_k\not\in\rmi\mathbb{R}}
\sum_{n=0}^{n_k-1}\frac{1}{n!}(\gamma\nu t)^n
\nonumber\\
&\qquad
\le
\frac{1}{\gamma}M^2\|C\|
\left(
\frac{M+1}{\Delta}
\rme^{2t \chi_\nu M\|C\|}
+\frac{1}{\eta}
\rme^{tM\|C\|}
\right)
+M
\rme^{-\gamma\eta t}
\sum_{n=0}^{D-1}\frac{1}{n!}(\gamma\nu t)^n
\nonumber\\
&\qquad
\le
\frac{1}{\gamma}M^2\|C\|
\left(
\frac{2M}{\Delta}
+\frac{1}{\eta}
\right)
\rme^{2tM^2\|C\|}
+M
\rme^{-\gamma\eta t}
\sum_{n=0}^{D-1}\frac{1}{n!}(\gamma\eta t)^n.
\end{align}
This is the bound presented in Eq.~(\ref{eqn:BoundAdiabaticSimpler}).


\end{document}